\newtheorem{theorem}{Theorem}
\newtheorem{corollary}{Corollary}
 \newenvironment{packeditemize}{
	\begin{list}{$\bullet$}{
			\setlength{\labelwidth}{4pt}
			\setlength{\itemsep}{0pt}
			\setlength{\leftmargin}{\labelwidth}
			\addtolength{\leftmargin}{\labelsep}
			\setlength{\parindent}{0pt}
			\setlength{\listparindent}{\parindent}
			\setlength{\parsep}{0pt}
			\setlength{\topsep}{1pt}}}{\end{list}}
\newcommand{\change}[1]{\textcolor{black}{#1}}
\begin{document}

\title{BlockFUL: Enabling Unlearning in Blockchained Federated Learning}



\author{Xiao~Liu,
        Mingyuan~Li,
        Guangsheng~Yu,
        Xu~Wang,
        Wei~Ni,
        Lixiang~Li,
        Haipeng~Peng,
        and~Ren~Ping~Liu
        
\IEEEcompsocitemizethanks{

\IEEEcompsocthanksitem This work was supported in part by the National Key R\&D Program of China (Grant Nos. 2024YFB2906503 and 2024YFB2906500), in part by the National Natural Science Foundation of China (Grant No. 62032002), and in part by the Doctoral Student Innovation Fund (Grant No. CX2023217). Xiao Liu and Mingyuan Li contributed equally to this work and are co-first authors. 

\IEEEcompsocthanksitem Corresponding authors: Guangsheng Yu; Lixiang Li

\IEEEcompsocthanksitem X. Liu, M. Li, L. Li, and H. Peng are with the Information Security Center, State Key Laboratory of Networking and Switching Technology, Beijing University of Posts and Telecommunications, Beijing 100876, China. \protect
E-mail: \{liuxiao68, henryli\_i, lixiang, penghaipeng\}@bupt.edu.cn
\IEEEcompsocthanksitem G. Yu, X. Wang, W. Ni, and R. P. Liu are with the Global Big Data Technologies Centre, University of Technology Sydney, Australia. \protect
E-mail: \{guangsheng.yu, xu.wang, wei.ni,  renping.liu\}@uts.edu.au

}
}


\maketitle

\begin{abstract}

Unlearning in Federated Learning (FL) presents significant challenges, as models grow and evolve with complex inheritance relationships. This complexity is amplified when blockchain is employed to ensure the integrity and traceability of FL, where the need to edit multiple interlinked blockchain records and update all inherited models complicates the process.
In this paper, we introduce Blockchained Federated Unlearning (BlockFUL), a novel framework with a dual-chain structure—comprising a live chain and an archive chain—for enabling unlearning capabilities within Blockchained FL. BlockFUL introduces two new unlearning paradigms, i.e., parallel and sequential paradigms, which can be effectively implemented through gradient-ascent-based and re-training-based unlearning methods. These methods enhance the unlearning process across multiple inherited models by enabling efficient consensus operations and reducing computational costs. 
Our extensive experiments validate that these methods effectively reduce data dependency and operational overhead, thereby boosting the overall performance of unlearning inherited models within BlockFUL on CIFAR-10 and Fashion-MNIST datasets using AlexNet, ResNet18, and MobileNetV2 models.
\end{abstract}

\begin{IEEEkeywords}
machine unlearning, federated learning, dag, blockchain, privacy.
\end{IEEEkeywords}

\section{Introduction}\label{section:1}

Blockchained Federated Learning (FL) integrates blockchain with FL to enhance model trustworthiness and regulatory compliance, and support tokenized model ownerships with decentralized model governance. 
Its certified-by-blockchain mechanism ensures that model lineage and updates are verifiable and tamper-proof, enabling secure model sharing and facilitating commercialization. In healthcare, Blockchained FL facilitates disease diagnosis models, extending to applications like personalized treatments and drug development while ensuring traceability and auditability~\cite{uddin2021blockchain}. 
In finance, it supports financial prediction models and derivative products, maintaining data privacy, transparency, and regulatory adherence~\cite{li2020blockchain}. Additionally, Blockchained FL aligns with various FL structures—including traditional~\cite{krauss2024automatic}, multi-layer~\cite{chu2024multi}, multi-server~\cite{nguyen2022latency}, and decentralized FL~\cite{yu2023ironforge}—where model inheritance relationships naturally form a directed acyclic graph (DAG), reinforcing its relevance in large-scale collaborative learning.

While helping prevent FL models from being tampered with~\cite{li2020blockchain} and provides traceability~\cite{uddin2021blockchain},
using blockchains stops the models from being rectified (when needed, e.g., some training data contains sensitive information or is later identified as questionable or contaminated). 
In addition, any user is entitled to request to eliminate the impact of its personal training data on machine learning models~\cite{halimi2022federated,yuan2023federated}. In this sense, machine unlearning~\cite{chundawat2023zero} and the capability of editing and updating blocks in Blockchained FL~\cite{yu2020enabling} become relevant.

An advanced one among existing machine unlearning methods is Sharded, Isolated, Sliced, and Aggregated (SISA)~\cite{bourtoule2021machine}, which divides data into independent slices and trains them separately. Only the affected model slice needs to be re-trained to improve unlearning efficiency. 
Unfortunately, the inheritance relationships among models prevent SISA unlearning from being executed in parallel. This is due to the nature of incremental learning, where knowledge is inherited along with the reference between models. This nature renders SISA incapable of isolating unlearning processes since models need to be broadcast to all clients in any round~\cite{koch2023no}. With SISA ruled out, alternatives such as  re-training~\cite{liu2020federated,halimi2022federated,liu2022right} and gradient ascent~\cite{liu2022backdoor} come to the fore.
Re-training uses an updated dataset to re-train the entire model. Gradient ascent adjusts the model weights to generate larger errors on known data, thereby reducing its reliance on previously learned content and achieving ``forgetfulness''.

Integrating blockchains with FL under model inheritance relations can incur significant overhead in the unlearning processes. One reason is that although primitives, such as the Chameleon Hash (CH), have been considered for conditionally editing blockchains~\cite{jia2022redactable} at non-negligible costs with complex operations, historical models are recorded on-chain for certification purposes throughout iterations. Unlearning may require editing multiple related blockchain records to remove just one class of knowledge, which significantly increases the complexity of the unlearning operations. Another reason arises from the model inheritance, where a child model inherits and extends its parent model's characteristics, structure, and parameters. When performing an unlearning operation, the requested model and all its child models need to be updated.

To this end, two critical Research Questions (RQs) need to be addressed. 

\begin{packeditemize}
    \item 
\textbf{RQ1:} \textit{\change{How can a Blockchained FL framework unlearn historical models along with all inherited models while maintaining the traceability and integrity of models?}}

    \item 
\textbf{RQ2:} \textit{What unlearning methods can be adapted in a Blockchained FL system, and what are their associated performance and costs?}
   
\end{packeditemize}

In response to these two RQs, we propose a new framework, named BlockFUL, to empower Blockchained FL with unlearning capability. In this framework, users can delete data that they wish to remove or deem questionable, and update their models at an acceptable cost, without compromising the immutability of the blockchain, affecting the utility of inherited models, or altering their network structure.

\smallskip
\noindent\textbf{Contributions.}
This paper presents a novel Blockchained Federated Unlearning (BlockFUL) framework, which ensures model inheritance within generic FL systems and supports effective unlearning without compromising model interrelationships. BlockFUL features a dual-chain architecture—comprising a live chain and an archive chain—to provide users with appropriate access for unlearning. Designed for flexibility, BlockFUL supports a plug-and-play approach.

The key contributions of this paper are as follows: 
\begin{packeditemize}
  \item[$\bullet$]
  We propose the BlockFUL framework, the first comprehensive framework enabling unlearning within Blockchained FL. BlockFUL adopts a dual-chain architecture: a live chain and an archive chain. This architecture ensures the traceability, compliance, and integrity of model updates.

  \item[$\bullet$]
   \change{
  To embody and operationalize BlockFUL, we design two distinct unlearning paradigms:
  Parallel and sequential unlearning. Parallel unlearning enables concurrent updating of multiple models using a single consensus, significantly reducing consensus frequency and computational overhead. Sequential unlearning 
  serializes unlearning operations to prioritize the precision and order of model revisions.}

\item[$\bullet$]
\change{
We validate the feasibility and practicality of BlockFUL for both paradigms by demonstrating its compatibility with existing primitives, including CH-based redactable blockchain~\cite{jia2022redactable}, gradient ascent unlearning~\cite{liu2022backdoor} and re-training unlearning~\cite{liu2020federated}. We also analytically prove the convergence of the unlearning processes and quantify the computation, communication, and consensus costs for both parallel and sequential paradigms.}

 \end{packeditemize}

Experimental results confirm that sequential unlearning with re-training is highly effective, achieving nearly zero accuracy for unlearned data and \textbf{94.71\%} accuracy for retained data across various models and datasets while preserving model inheritance. However, it can be time-intensive for deep inheritance chains and requires full client participation. In contrast, parallel unlearning with gradient ascent shows varying effectiveness depending on model type and inheritance depth, with unlearning effectiveness ranging from \textbf{1.67\%} in AlexNet to \textbf{77.67\%} for retained data.

The rest of this paper is organized as follows. The related works are reviewed in Section~\ref{section:2}. The preliminary in Section~\ref{Preliminary}. The proposed BlockFUL framework is provided in Section~\ref{sec:system model}. The designed dual-chain architecture is presented in Section~\ref{sec:3}, followed by two unlearning paradigms in BlockFUL introduced in Section~\ref{section:4}. Section~\ref{section:5} presents the comparative experiments between the gradient ascent and re-training methods implemented in the new BlockFUL framework. Section~\ref{section:6}  concludes this work.

\section{Related Work}\label{section:2}

\subsection{Federated Unlearning}
Existing federated unlearning (FUL) studies focus primarily on parameter adjustment and re-training processes. 

\subsubsection{Parameter adjustment}
Halimi et al.~\cite{halimi2022federated} reversed a learning process by training the model to maximize local empirical losses and executed deletion of client contributions using Projected Gradient Descent (PGD) at clients. Wu et al.~\cite{wu2022federated} utilized class disassociation learning, client disassociation learning, and sample disassociation learning, employing reverse Stochastic Gradient Ascent (SGA) and Elastic Weight Consolidation (EWC) for joint unlearning of these three types of requests. Wu et al.~\cite{wu2022federated1} eliminated client contributions by subtracting accumulated historical updates from the model and utilized knowledge distillation methods to restore model performance without using client data. FRU~\cite{yuan2023federated} eliminates user contributions by rolling back and calibrating historical parameter updates, and then utilizes these updates to accelerate federated recommendation reconstruction. FFMU~\cite{che2023fast} utilizes nonlinear function analysis techniques to refine local machine unlearning models into output functions of Nemytskii operators, maintaining unlearning quality while improving efficiency. Wang et al.~\cite{wang2022federated} utilized CNN channel pruning to remove information about specific categories from the model for FUL processes.

\subsubsection{Re-training}

FedEraser~\cite{liu2020federated} utilized the historical parameter updates re-trained by the central server during FL to reconstruct the unlearning model. KNOT~\cite{su2023asynchronous} introduces cluster aggregation and formulates the client clustering problem as a dictionary minimization problem for re-training processes. Liu et al.~\cite{liu2022right} utilized first-order Taylor expansion approximation techniques to customize a diagonal empirical Fisher information matrix-based fast re-training algorithm.

\smallskip
Existing FUL studies predominantly assess whether the latest model version has been successfully unlearned, often overlooking crucial security challenges in FUL systems, such as trustworthiness, version control, and the traceability and accountability of unlearning iterations. They have not adequately addressed scenarios involving blockchain integration, where the need for tamper-proof data integrity and traceability introduces significant system overhead. These oversights become particularly problematic when integrating \textit{certified-by-blockchain} with \textit{model inheritance} across various prevalent FL structures, as this combination significantly amplifies the complexity and cost of FUL processes.

\subsection{Blockchain Federated Unlearning}

Islam et al.~\cite{islam2024federated} utilized redactable blockchains for flexible data updates in smart consumer electronics, enabling model update recording while allowing historical data modifications. However, they did not address the traceability and auditability of unlearning models. To enhance trustworthiness, Lin et al.~\cite{lin2024blockchain} proposed a federated unlearning proof protocol using the CH function, allowing target clients to invoke an unlearning rewrite function to erase model updates and associated data. Zuo et al.~\cite{zuo2024federated} introduced differential privacy to ensure transparency while preserving data confidentiality and reducing blockchain computational overhead. Rather than modifying on-chain data, their approach records a complete historical trace, leveraging blockchain immutability.

Different from these existing studies, our blockchain system features a dual-chain structure that supports redactability while ensuring model traceability and auditability. It also addresses real-world application requirements by enabling parallel model updates and the simultaneous update of multiple starting models. By judiciously optimizing blockchain operations, we minimize the overhead associated with redactability.

\section{Preliminary}\label{Preliminary}

\subsection{Federated Learning}
The goal of FL is to minimize the weighted average loss across $K$ clients, each with its own local dataset. The Federated Averaging (FedAvg) algorithm is typically used for this process~\cite{krauss2024automatic}.
\begin{packeditemize}
    \item \textbf{Local Training:} Each client performs several gradient descent updates on its local dataset to optimize its model.

    \item \textbf{Model Aggregation:} The server collects the local model parameters from all $K$ clients and computes their weighted average to update the global model.

    \item \textbf{Model Distribution:} The server sends the updated global model parameters to all clients for the next training round.
\end{packeditemize}

\subsection{Chameleon Hash}
The CH function allows modifying message data and its random value without changing the hash, provided the private key is known. This feature is especially useful in blockchain for scenarios requiring data adjustments~\cite{yu2020enabling}.

\begin{packeditemize}
    \item \textbf{CH Generation:} Generate a pair of public and private keys. The public key $pk$ is used for hash computation, while the private key $sk$ (trapdoor) is used for subsequent hash value modification. Given a message $m$ and a random number $r$, the CH value $h$ is computed using the CH function $CH$, resulting in
$h=CH\left( pk,m,r \right)$.

    \item \textbf{CH Verification:} For a given message $m$, random number $r$ and hash value $h$ the verification process is performed by checking whether
$h=CH\left( pk,m,r \right)$.  If this equation holds, the verification function $Verify(pk,m,r,h)$ returns $True$, confirming the validity of the hash value.

\item \textbf{CH Update:} When in possession of the private key $sk$, the hash value can be updated. Given the original message $m$, the original random number $r$, a new message $m^{\prime}$, and a new random number $r^{\prime}$, the updating process is carried out using the updating function $Update$, which computes $r^{\prime}=Update\left( sk,m,r,m^{\prime} \right)$. This ensures that the new hash value $h^{\prime}=CH\left( pk,m^{\prime} ,r^{\prime} \right)$ remains identical to the original hash value $h$, satisfying $h=h^{\prime}.$
\end{packeditemize}

\subsection{Machine Unlearning}
An unlearning process refers to scenarios where users participate in an FL task, if the training data later triggers privacy concerns or is publicly identified as damaged, a part of the training data needs to be withdrawn according to regulations, such as GDPR~\cite{voigt2017eu}. Unlearning tasks are used to remove the influence of these data from the models, ensuring that previously trained models cannot recognize such data~\cite{bourtoule2021machine}. The unlearning process is typically achieved through two primary methods: Re-training and gradient ascent.

\begin{packeditemize}
    \item \textbf{Re-training:} This straightforward approach to unlearning involves removing the specific data from the training set, re-training the model with the remaining data, and ensuring the new model excludes the unlearned information~\cite{liu2020federated}.

    \item  \textbf{Gradient ascent:} 
     This approach reverses optimization to increase the error on specific data, eliminating its influence by computing the loss, applying gradient ascent to maximize it, and iteratively reducing the dependence of the data~\cite{liu2022backdoor}.
\end{packeditemize}


\section{System Model}
\label{sec:system model}

This section presents the proposed BlockFUL framework, enabling unlearning in Blockchained FL. BlockFUL introduces a dual-chain structure—comprising a live chain and an archive chain—to facilitate controlled unlearning access. It preserves model inheritance across FL structures while ensuring efficient unlearning with comparable operational costs.

\begin{figure*}[!ht]  
\begin{center}
	\includegraphics[width=0.9\textwidth]{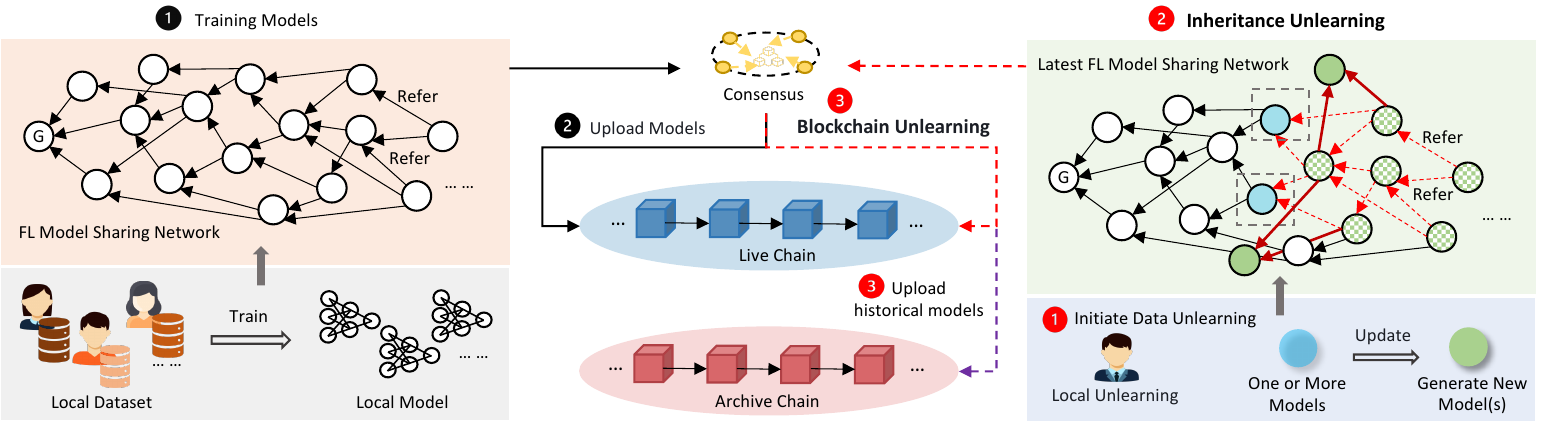}
\caption{
\small This diagram summarizes BlockFUL. The existing training stage is illustrated on the left: Users (both clients and servers) participate in an FL task, where they reference existing models to generate models and publish them to the network. The models are uploaded to a blockchain-based storage system. The proposed unlearning stage is illustrated on the right: A user needs to unlearn data (blue node, one or more models) and update models (green nodes), triggering a cascade of updates to subsequent affected models (green lattice nodes), forming a sub-DAG of the inheritance relationship. Historical model records are maintained in the ``archive chain'' of a new dual-chain structure, while the model is up-to-date in the ``live chain'' of this dual-chain structure.}

\label{fig:system framwork2}
\end{center}
\end{figure*}

\subsection{Preparation}

\subsubsection{Registration} Before the users can participate in the BlockFUL task, they need to register in the network. The users are both model publishers and model users. 

\subsubsection{Key initialization} Each user generates two key pairs, i.e., the conversation key pair ($pk, sk$) and the CH key pair ($CH_{pk}, CH_{sk}$). 

\subsubsection{Key usage and permissions} $pk$ and $sk$ are used for signing and verifying a user's legitimate identity. The CH private key $CH_{sk}$ is assigned to a committee. As a result, the committee can participate in the redactable process of the blockchain. Meanwhile, the CH public key $CH_{pk}$ is broadcast in the network. The users only have permission to share and use the model, and cannot perform change operations.

\subsection{Training Models} 

As shown in Fig. \ref{fig:system framwork2}, the BlockFUL framework is constructed based on an inheritance structure~\cite{yu2023ironforge}, where multiple users participate in FL, and each user can train multiple models. 
Consider a weighted directional graph \( G = (V, E) \), where \( V = (GV, MV) \). Herein, \( GV \) denotes the creation model node from which a task is issued, and
 $MV=\left\{ mv_{1},mv_{2},\ldots,mv_{j}\right\}$ collects model nodes in the network that participate in an FL task. 
The weighted edges, collected by $E=\left\{ e_{01},e_{10},\ldots,e_{xx}\right\}$, represent the links of inheritance relationship between the user models in the network. 

Non-existent links indicate that the weights are null. 
The inheritance relation, $r_{j\rightarrow s}=\left\{ mv_{j}\rightarrow mv_{s}\right\}$, indicates that $mv_{j}$ inherits from $mv_{s}$. Moreover, $R_{j}=\left\{ r_{j\rightarrow s},r_{j\rightarrow f},\ldots\right\}$ denotes the set of inheritance relationships referencing other model nodes from $mv_{j}$. It is further expressed as $\mathbb R_{j\rightarrow\cdots\rightarrow s}=\left\{ R_{j},\ldots\right\}$, as shown in Fig.~\ref{fig:main}. Here, $r_{j\rightarrow s}$ is one of the elements in the set of the weighted edges, $E$.

\subsubsection{Candidate models} A user employs its local test dataset $D^{test}$ to randomly select a number of models contributed by the other users for evaluation. Subsequently, the user obtains the test accuracy set $AC^{test}$ of these models until collecting $K$ candidate model sets $W^{\ast}$, as given by
\begin{equation}\label{eq.1}
\begin{split}
   (AC^{test}, W^{\ast }) &= \left\{(F(w_i, D^{test}), w_i) \, \middle| \, w_i \in K, \right. \\ & \left. \, i = 1, 2, \cdots, k\right\}, 
\end{split}
\end{equation}
where $F\left( \cdot \right)$ indicates that the user evaluates the model on its own test dataset $D^{test}$ using the model $w_i$ and obtains an accuracy. Then, this function $F(\cdot)$ returns the accuracy value. 


\subsubsection{Selection and aggregation} The user selects the top-$N$ models from the previous randomly selected $K$ candidate models, which constitute a referenced model set $\mathbb N$ for accuracy ranking. Then, the user conducts model aggregation to obtain a pre-aggregation model $\tilde{W}$, as given by

\begin{equation}\label{eq.2}
\tilde{W} =\sum_{w_{n}\in \mathbf{w}} \frac{1}{N^{R_{n}}} w_{n},
\end{equation}
where $\mathbf{w}$ is the set of models in the referenced model set $\mathbb N$ and $N^{R_{n}}$ is the number of referenced models associated with model $w_{n}$.


\subsubsection{Training} The user trains $\tilde{W}$ with its local training dataset $D^{train}$ to obtain the final aggregated model $W$:
\noindent\begin{equation}\label{eq.3}
W=\daleth \left( \tilde{W},\phi ,D^{train}\right),  
\end{equation}
where $\phi$ denotes the training settings, including the learning rate and batch size; and $\daleth \left( \cdot \right)$ denotes the user's training function for the task. 


\subsubsection{Evaluation}  After the completion of training, the user evaluates the model on its local test dataset $D^{test}$ for the final accuracy $AC$. 


\subsubsection{Node generation} The user prepares a model node $mv_{j}$ that includes the accuracy set $\mathbb A\mathbb C$ and the model set $\mathbb W$ from the referenced model set $\mathbb N$, the CH value $CH(W)$ of the model, the identifier $URI(W)$ of the model, $AC$, $\phi$ and the creation timestamp $T_{mv_{j}}$. Then, user $j$, $\forall j$, signs $mv_{j}$ and broadcasts the signed nodes in the network.

\begin{figure}[t]
\centering
\includegraphics[width=0.6\linewidth]{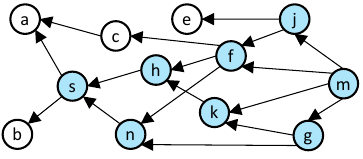}
	\caption{ \small The blue nodes are inherited nodes influenced by node $s$, namely, the child nodes of $s$. $w_{*}$ represents the model corresponding to node $*$. Between $w_{s}$ and $w_{m}$, the models affected by model $w_{s}$ are $w_{n}$, $w_{h}$, $w_{k}$, $w_{f}$, $w_{j}$ and $w_{g}$. Therefore, the set of inheritance relations  $\mathbb R_{m\rightarrow\cdots\rightarrow s}=\left\{ R_{m},R_{g},R_{j},R_{f},R_{k},R_{h},R_{n}\right\}$.}
 \label{fig:main}
\end{figure}

\subsection{Unlearning Models} 

In BlockFUL, this unlearning process needs to be conducted across the inheritance DAG, followed by an additional blockchain unlearning process.

\smallskip
\subsubsection{Inheritance unlearning phase} The unlearning request is initiated by a specific node. Employing various unlearning methods, all inherited models influenced by this node need to be updated, as shown by the green lattice nodes in Fig.~\ref{fig:system framwork2}. The updating process continues until the model nodes reach the latest generated nodes. Two cases are studied in Section~\ref{section:4}.

\smallskip
\subsubsection{Blockchain unlearning phase} As described in Section~\ref{section:3.2}, we incorporate CH into the live chain’s block structure to enable editability. Each transaction contains a CH value and a reference list $R_{Tx}$, forming a complete inherited model network and allowing updates to inherited transactions without disrupting the blockchain structure.

\change{
\subsection{Trust Model and Security Analysis}
\subsubsection{System entities}
We consider three types of participants in the system.}

\begin{packeditemize}
\item \change{\textbf{User.} Users in the FL task are assumed to be \textit{honest-but-curious}, complying with protocols and accessing models only on the live chain. Model inheritance is time-sensitive; new users cannot access models published before a specific block height, which prevents retrospective inference of early training data at the system level. BlockFUL can also employ $(\epsilon,\delta)$-differential privacy~\cite{abadi2016deep} to prevent dishonest users or attackers from inferring other users' training data through the models.}

\item\change{\textbf{Committee.} Committee members are elected blockchain miners responsible for achieving consensus on both the live chain and archive chain. A two-stage committee selection process, including pool formation and member selection, can select responsible committee members. }

\begin{enumerate}
\item\change{
In the Initial Pool Formation stage, BlockFUL can leverage a Proof-of-Stake (PoS) mechanism~\cite{gavzi2019proof}, which requires candidates to commit substantial stakes to join the committee pool. The pool formation can be augmented with a dynamic trust scoring scheme~\cite{ganeriwal2008reputation} which continuously evaluates member behavior and excludes those who fail to maintain required trust thresholds. }
\item\change{
 In the Member Selection stage, committee members are chosen from the qualified pool through a Verifiable Random Function (VRF)-based selection protocol~\cite{gilad2017algorand}, which ensures both unbiased randomness and cryptographic verifiability while maintaining complete transparency in the selection process. 
The two-stage selection process enhances system redundancy and attack resistance.
The PoS and trust score schemes work together to regulate committee behavior, ensuring committee members conduct responsible blockchained FL and FUL consensus while maintaining an up-to-date view of the live chain. Any misconduct by malicious committee members results in their exclusion from the committee pool and forfeiture of their deposited stake. }
\end{enumerate}

\item \change{\textbf{Adversary.} Adversaries are malicious entities that aim to disrupt blockchain consensus processes, including model inheritance and update verification procedures.}  
    
 \end{packeditemize}

\change{\subsubsection{Quantitative security analysis} Let $P$ denote the size of the committee pool containing $M$ malicious candidates. Using VRF, $N$ committee members are randomly selected from the pool. Given that the trust score mechanism disincentivizes malicious behavior, we model each malicious committee member as attacking with probability $\rho$. BlockFUL maintains consensus integrity as long as the number of malicious votes remains below the consensus protocol's fault tolerance threshold $f$, where $f< M\le N$. The attack success rate, denoted by $\mathcal A$, can be given by
\begin{equation}
    \mathcal A = \sum_{k=f+1}^{M} \left( \frac{{\binom{M}{k} \binom{P-M}{N-k}}}{{\binom{P}{N}}} \times \sum_{i=f+1}^{k} \binom{k}{i} \rho^i (1-\rho)^{k-i} \right)
\end{equation}
Take a Practical Byzantine Fault Tolerance (PBFT) consensus protocol tolerating up to $f=\left\lfloor \frac{N-1}{3} \right\rfloor$ malicious members for example. Setting $P=30$, $M=10$, $N=21$, and $\rho=0.2$, we obtain $f = 6$ and $\mathcal{A}=0.0000625$, demonstrating the system's robust security against potential compromises.
}

\subsection{Design Goal}
\label{sec:4.1}

Our goals for unlearning in Blockchained FL include:

\smallskip
\noindent\textbf{G1. Unlearning effectiveness.} Unlearning effectiveness is defined by the system's ability to completely eliminate the impact of unlearned data on the model. An effective unlearning process guarantees the data samples that have been unlearned no longer affect the model's predictions or parameters.

\noindent\textbf{G2. Comparable model utility.} The unlearning scheme should preserve the model accuracy on retained data categories across all updated models in the DAG of the model inheritance. This requires careful design to ensure that the unlearning process does not adversely affect the overall performance of the models and the utility of the retained data.

\noindent\textbf{G3. Support multi-start and multi-class unlearning.} 
The unlearning process can start with multiple models in the DAG. This scenario arises when a user contributes multiple models to the FL task within a period and requests unlearning from all contributed models. The unlearning tasks are expected to unlearn a single or multiple classes of data.

\noindent\textbf{G4. Manageable unlearning cost.} 
The resource and time overhead of running unlearning tasks should be reasonable and feasible. In the case of unlearning tasks with multiple starting models, the costs of unlearning with multiple starts should be substantially lower than the total cost of unlearning with individual starts.

\section{Dual-Chain Unlearning Architecture}
\label{sec:3}

In this section, we design a new dual-chain structure in the BlockFUL framework, where an archive chain is utilized to record historical models and a live chain 
shares the up-to-date FL models. This 
offers unlearning capabilities while ensuring traceability and integrity.
We design the block structure in the live chain using primitives, e.g., the CH algorithm, to provide redactable operations on the data stored on the live chain to enable the process of the blockchain unlearning stage.

\begin{figure*} [!ht]
        \centering
	\includegraphics[width=0.9\textwidth]{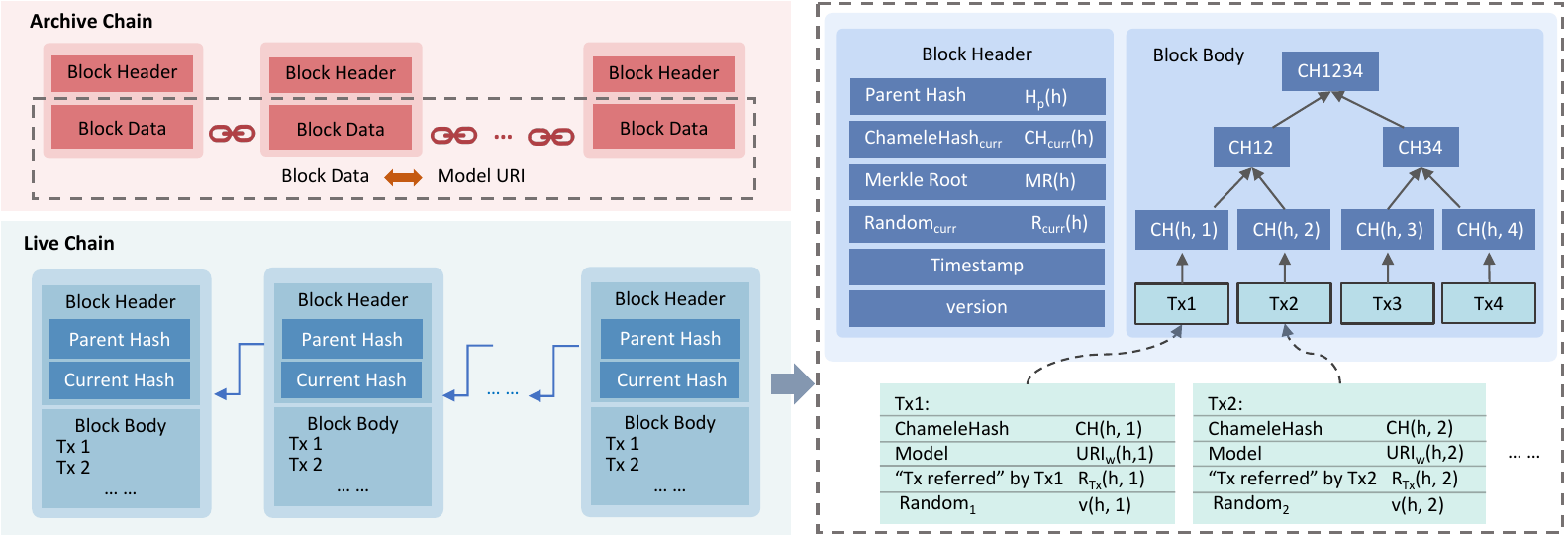}
	\caption{ \small In the dual-chain structure of BlockFUL, the archive chain stores the model's URI as block data, with multiple block data entries included within a single block. Conversely, in the live chain, the model's URI is embedded within individual transactions, with multiple transactions encapsulated in one block. Prior to any update triggered by an unlearning request, models scheduled for replacement are uploaded to the archive chain, followed by updates to the live chain that record the model in its latest version. This procedure is also applied to models undergoing multiple updates in one unlearning task.}
 \label{fig:block struture}
\end{figure*}

\subsection{Transactional Models in the Dual-Chain Structure}

This process records models in the blockchain using a dual-chain structure; see Fig.~\ref{fig:block struture}. Each transaction represents a model, with multiple transactions per block. Transactions are uploaded to archive and live chains via unified consensus, minimizing network consensus frequency. After validation, models are stored in decentralized storage, e.g., Inter Planetary File System  (IPFS), with separate instances for each chain.

\smallskip
\noindent\textbf{Archive Chain.}
The block data records generated by local participants in FL are uploaded to the archive chain. The archive chain does not grant access permission to FL users; only relevant auditing parties can view model records in the archive chain. This approach provides traceability for all model uploading activities and ensures the integrity of the models. The archive chain needs to record all historical models before and after unlearning. These historical models do not need updating, and are managed by a smart contract in a list format.

\smallskip
\noindent\textbf{Live Chain.}
The transaction records generated by local training models are uploaded to the live chain. The live chain provides model sharing and includes the latest model versions. Model updates are executed by the committee. The models are only updated on the live chain when all committee members reach a unanimous consensus. The live chain differs from the archive chain in structure. Since the live chain only records the latest state of the models, smart contracts need to operate with models as transaction entries.

In the live chain, we design the structure of the blockchain to link our blocks by having the hash of the current block's parent point to the hash of the previous block. Each block consists of a block header and a block body, consistent with the structure of traditional blocks. We use CH to ensure that transactions and block headers are redactable. This design enables BlockFUL to update multiple transactions and blocks simultaneously through a single committee consensus operation, reducing consensus frequency.\label{section:3.2}

\smallskip
\noindent\textbf{Block header.}
Fig. \ref{fig:block struture} shows a block with height $H$. The block header contains the following key fields.
\begin{packeditemize}
    \item $H_{p}\left( h\right)$: This is the parent hash of the block before linking.

    \item $CH_{curr}\left( h\right)$: Considering model updating, it functions as the hash value for the current block. This role ensures that any changes to the model version resulting from updates, do not affect the validity of the current block.

    \item $R_{curr}\left( h\right)$: The latest random number for the version-based CH value.

    \item $version$: This field is updated with the corresponding version number for each update task, maintaining consistent versions in the global state.
\end{packeditemize}
There are also the common block body's Merkle root $MR(h)$ and current block generation time $timestamp$ in block header. 

\smallskip
\noindent\textbf{Block body.} A block body contains multiple redactable transactions uniformly composed in a Merkle tree structure and stores model-related values. A redactable transaction contains the following fields, where we take transaction $Tx_{i,j}$ as an example, i.e., the $j$-th transaction in the body of block $i$.
\begin{packeditemize}
    \item CH Value $CH\left( i,j\right)$: This is the CH value of a transaction, and the field remains unchanged because it is involved in the $MR(h)$ calculation.

    \item Model $URI_{w}\left( i,j\right)$: This field stores the identifier of the model linked to the transaction. Only one model identifier is recorded per transaction. According to $URI_{w}\left( i,j\right)$, we can find the corresponding model in the IPFS.

    \item Random Number $v(i,j)$: This field stores the latest random number used to compute the CH value based on $URI_{w}\left( i,j\right)$.

    \item $Tx_{i,j}$'s reference list $R_{Tx}\left( i,j\right)$: 
    This field stores the list of transactions with models referenced by the model stored in $Tx_{i,j}$. Based on all the reference lists stored in the blockchain, we can obtain the inheritance relations of all models in the BlockFUL task.
\end{packeditemize}

The design of this block contributes to efficient and secure redactability in the blockchain. Due to the use of CH, transaction updates do not alter the block body's Merkle root. However, the version field in the block header may cause collaborative updating issues with the block header. To ensure the consistency of transaction versions, we introduce the $version$ field in the block body. As a result, the validity of the link between blocks is maintained, and computational overhead is minimized. 

Moreover, $CH\left( i,j\right)$ is a field in the block that needs verification for the block body. In this way, even if $URI_{w}\left( i,j\right)$ and $v(i, j)$ are updated, $CH(i,j)$ in the block body remains unchanged. Thus, $MR(h)$ in the block header also remains unchanged. These conditions satisfy  the immutability of $CH_{curr}\left( h\right)$ and $CH(i, j)$. A valid block with an immutable block header can still be verified.
 On the other hand, this redactability does not compromise the tamper-resistance of the blockchain. The redactability of blocks is guaranteed by $CH_{sk}$ (trapdoor), an indispensable input for CH updates~\cite{yu2020enabling}.

\subsection{Model Unlearning in Live Chain}

During the initialization phase, the CH public-private key pair is generated by the users participating in FL. We first generate a private (trapdoor) key $CH_{sk}$ and a public (hash) key $CH_{pk}$ based on a security parameter $\lambda$ and a system parameter $CH_{para}$. $CH_{pk}$ is broadcast to the network. The committee members in the blockchain each hold a $CH_{sk}$. Next, the computation involves the payload $\mathcal{T}(i,j)$ containing the hash value of the model $URI_{w}(i,j)$ and the reference list $R_{Tx}(i,j)$, CH public key $CH_{pk}$, and a random number to generate the CH value.
The calculation is as follows:
\begin{equation}\label{eq.4}
\begin{split}
&\mathcal{T}(i,j) = \{(URI_{w}(i,j), R_{Tx}(i,j)\}, \\
&CH\  Hash\left( \mathcal T(i,j),CH_{pk},v(i,j)\right)  \rightarrow ChameleHash.
\end{split}
\end{equation}

The parameter update equation related to $CH$, as specified in~(\ref{eq.5}), ensures that the original $CH$ value remains valid after information updates. The committee inputs the original payload $\mathcal{T}(i,j)$, which includes the model $URI_{w}(i,j)$ before the update, along with the new payload $\mathcal{T}^{\prime}(i,j)$ containing the updated model $URI^{\prime}_{w}(i,j)$ and the unchanged reference list $R_{Tx}(i,j)$, the random number $v(i,j)$ before the update, and the corresponding private key $CH_{sk}$. It outputs the updated random number $v^{\prime }(i,j)$, as follows:
\begin{equation}\label{eq.5}
\begin{split}
   CH\  update\left( \mathcal{T}(i,j),\mathcal{T}^{\prime}(i,j),v(i,j),CH_{sk}\right) \rightarrow 
   v^{\prime }(i,j).
  \end{split}
\end{equation}
Finally, by replacing the updated payload $\mathcal{T}^{\prime}(i,j)$ and $v^{\prime}(i,j)$ in the original $ChameleHash$, the output satisfies the following conditions:
\begin{equation}\label{eq.6}
\begin{split}
    & CH\  hash\left( \mathcal{T}^{\prime}(i,j),CH_{pk},v^{\prime }(i,j)\right) = \\
& CH\  hash\left( \mathcal{T}(i,j),CH_{pk},v(i,j)\right).
\end{split}
\end{equation}
After the transaction update is completed, the block header also needs to be updated. This is achieved by updating the version field, following a similar process to the transaction update. Once the committee completes the transaction and block updates via consensus, the information about which models have been updated and to what versions is broadcast throughout the live chain.

\section{Blockchained Federated Unlearning Paradigms}\label{section:4}

In this section, we elaborate on the new parallel and sequential unlearning paradigms in BlockFUL and explain the implementation processes by applying gradient ascent and re-training, respectively. We present the algorithms and their computational costs. Note that gradient ascent and re-training do not serve as comparative baselines, but rather as implementations for realizing the two distinct unlearning paradigms of parallelism and sequentialism.

\subsection{Paradigm 1 -- Parallel Unlearning}

\textit{Parallel unlearning} refers to the committee members collectively reaching a single consensus for updating all to-be-unlearned models within an unlearning task. After updating all models and reaching a unified consensus, the results are recorded on the chain simultaneously. To illustrate the parallel unlearning paradigm, we employ the gradient ascent to demonstrate this process.
We introduce a new gradient passing function that manipulates the ascending gradients for inherited models within the DAG using the ascending gradients from the starting models. This design ensures that updates to the starting model are efficiently propagated to all related inherited models, enabling parallel updates for single or multiple requests while reducing computational overhead.

\noindent\textbf{Update with a single starting model.} In this case, multiple paths may exist from the starting model to an inherited model. The update of the inherited model depends on the model nodes traversed by these paths and the number of models they inherit.

A user requesting to-be-unlearned data, re-trains the original model $w_{s}$ on its client to generate an updated model $w^{\prime }_{s}$. The difference between the gradients of the model before and after unlearning the data can be obtained, denoted by $\nabla \theta_{s}$. Starting from the updated model $w^{\prime }_{s}$, an inherited model $w_{y}$ of the original model $w_{s}$ could be updated accordingly as
\begin{equation}\label{eq.7}
w^{\prime }_{y}=\sum_{p_{i}\in P} \frac{\alpha \nabla \theta_{s}}{\prod_{w_j \text{ on } p_i} N^{R_j}}  +w_{y},
\end{equation}

where $P$ denotes the set of paths from $w_{s}$ to $w_{y}$, $P=\left\{ p_{1},\ldots,p_{l}\right\}$, ``$w_j \text{ on } p_i$'' denotes traversing each model node on path $p_i$ except $w_s$, $N^{R_{j}}$ denotes the number of models inherited by model node $w_j$ passing through path $p_{i}$, and $\alpha$ denotes discount factor. (\ref{eq.7}) handles model inheritance in FL, enabling the update of all relevant inherited models.

\noindent\textbf{Update with multiple starting models.} 
A user participating in a BlockFUL task contributes multiple models, and these models may have different gradients. During updates, these models serve as the starting models and have varying degrees of influence on the inherited models (i.e., the update of the inherited models may involve the computation of one or more starting model gradients). In other words, the inherited models may be affected by scenarios, e.g., the same gradient but different paths, or different gradients and paths.

For a model $w_{y}$ affected by multiple starting model updates, the update results are given by
\begin{equation}\label{eq.8}
w^{\prime }_{y}=\!\sum_{\nabla \theta^{k}_{s}\! \in \!\nabla \theta,  P^{\nabla \theta^{k}_{s}}\!\in\! \mathbb P} \sum_{p^{\nabla \theta^{k}_{s} }_{i}\in P^{\nabla \theta^{k}_{s}}} \frac{\alpha \nabla \theta^{k}_{s} }{\prod_{w_j \text{ on } p_i} N^{R_j}}\! +w_{y},
\end{equation}
where $\nabla \theta =\left\{ \nabla \theta^{1}_{s} ,\ldots ,\nabla \theta^{h}_{s} \right\}$ is the set of multiple model gradients for a user, $p_{i}^{\nabla \theta^{k}_{s} }$ is the $i$-th path with gradient $\nabla \theta^{k}_{s}$, $P^{\nabla \theta^{k}_{s}}$ denotes the set of paths with gradient $\nabla \theta^{k}_{s}$, and $\mathbb P=\left\{ P^{\nabla \theta^{1}_{s}},\ldots,P^{\nabla \theta^{h}_{s}}\right\}$ is the set of paths with all gradients.

The gradient ascent process continues on inherited models until a threshold $\varepsilon$ is reached, beyond which a model $w_y$ is not updated if $|w^{\prime}_{y}-w_{y}| \leqslant \varepsilon$. If the FL training task ends before reaching the threshold, the parallel gradient ascent process also stops.
 As shown in~(\ref{eq.7}) and~(\ref{eq.8}), the updated model $w^{\prime}_{y}$ is related to the number of model inheritances along the path from the starting models to model $w_y$. Furthermore, the number of model inheritances is associated with the depth from the starting models to model $w_y$. As the gradient ascent process carries on, a larger number of model inheritances leads to a faster reduction in the gradients and shallower updates. 

Next, we prove the boundedness of depths required for updating from the starting model to its inherited model. Let $\nabla\theta=\alpha\nabla \theta_{s}$, and consider $D=\{d_1, d_2, \ldots, d_l\}$ with $d_i$ being the depth of path $p_i$. We establish the following theorem.

\begin{theorem} \label{theorem:1}
When the number of model inheritances is $N^{R_{j}}\geqslant 2$, the depth required for model updating to complete satisfies
\begin{equation}\label{eq.9}
d_{i}\geqslant \lceil \log_{2}\frac{|\nabla \theta |}{\varepsilon } \rceil ,\ \forall d_i \in D; 
\end{equation}
\begin{equation}\label{eq.10}
\sum_{d_{i}\in D} \frac{1}{2^{d_{i}}} \leqslant \frac{\varepsilon }{|\nabla \theta| }.
\end{equation}
\end{theorem}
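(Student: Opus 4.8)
The plan is to reduce the whole statement to a single clean estimate on each path's gradient denominator, namely $\prod_{w_j \text{ on } p_i} N^{R_j} \geq 2^{d_i}$, and then read off both inequalities from the stopping rule $|w^{\prime}_y - w_y| \leq \varepsilon$ introduced just before the theorem.

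First I would establish the denominator bound. Along a path $p_i$ of depth $d_i$, the update rule (\ref{eq.7}) divides the discounted starting gradient $\nabla\theta = \alpha\nabla\theta_s$ by the product $\prod_{w_j \text{ on } p_i} N^{R_j}$ of the inheritance counts of the $d_i$ nodes traversed on $p_i$. Since the hypothesis gives $N^{R_j} \geq 2$ at every such node, a term-by-term comparison (or a trivial induction on $d_i$) yields $\prod_{w_j \text{ on } p_i} N^{R_j} \geq 2^{d_i}$, so the per-path contribution magnitude obeys $\frac{|\nabla\theta|}{\prod_{w_j \text{ on } p_i} N^{R_j}} \leq \frac{|\nabla\theta|}{2^{d_i}}$.

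Next I would translate the stopping rule into (\ref{eq.10}). The aggregate update of $w_y$ is $w^{\prime}_y - w_y = \sum_{p_i \in P} \frac{\nabla\theta}{\prod_{w_j \text{ on } p_i} N^{R_j}}$, and updating halts exactly when $|w^{\prime}_y - w_y| \leq \varepsilon$. Evaluating this at the worst-case, slowest-decaying configuration, where every traversed node attains the minimum $N^{R_j} = 2$ so that $\prod_{w_j \text{ on } p_i} N^{R_j} = 2^{d_i}$ and the contributions align in sign, gives $|\nabla\theta|\sum_{d_i \in D}\frac{1}{2^{d_i}} \leq \varepsilon$, which is exactly (\ref{eq.10}) after dividing by $|\nabla\theta|$. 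I would then recover the per-path depth bound (\ref{eq.9}) from (\ref{eq.10}): since every summand $\frac{1}{2^{d_i}}$ is nonnegative, each is dominated by the whole sum, so $\frac{1}{2^{d_i}} \leq \sum_{d_i \in D}\frac{1}{2^{d_i}} \leq \frac{\varepsilon}{|\nabla\theta|}$ for all $d_i \in D$; rearranging gives $2^{d_i} \geq \frac{|\nabla\theta|}{\varepsilon}$, and taking base-2 logarithms together with the integrality of the depth $d_i$ yields $d_i \geq \lceil \log_2\frac{|\nabla\theta|}{\varepsilon} \rceil$.

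I expect the main obstacle to be the middle step: rigorously justifying the passage from the vector-valued stopping condition $|w^{\prime}_y - w_y| \leq \varepsilon$ to the scalar inequality (\ref{eq.10}). This requires arguing that the extremal case $N^{R_j} = 2$ (turning $\prod N^{R_j} \geq 2^{d_i}$ into the equality $\prod N^{R_j} = 2^{d_i}$) together with the aligned-sign case indeed furnishes the binding bound used to declare completion; otherwise the stopping rule only delivers $\sum 1/\prod N^{R_j} \leq \varepsilon/|\nabla\theta|$, and the reduction to the $2^{-d_i}$ form needs this worst-case interpretation stated explicitly. The remaining manipulations, including the ceiling from integrality of $d_i$, are routine.
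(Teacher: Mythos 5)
Your proposal is correct and follows essentially the same route as the paper's own proof: the key estimate $\prod_{w_j \text{ on } p_i} N^{R_j} \geqslant 2^{d_i}$ combined with the stopping rule $|w^{\prime}_y - w_y| \leqslant \varepsilon$, with only a cosmetic reordering (you derive (\ref{eq.10}) first and read (\ref{eq.9}) off each summand, whereas the paper derives (\ref{eq.9}) per path and then sums). The obstacle you flag is genuine and is present in the paper's argument as well—the chain $\frac{|\nabla\theta|}{\prod N^{R_j}} \leqslant \frac{|\nabla\theta|}{2^{d_i}} \leqslant \varepsilon$ in (\ref{eq.13}) only follows from the stopping rule in the extremal case $N^{R_j}=2$, which the paper acknowledges only implicitly via its closing conditional ``if $\frac{1}{\prod N^{R_j}} = \frac{1}{2^{d_i}}$''; your explicit worst-case reading is the honest way to close that step.
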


\begin{proof}
From (\ref{eq.7}), it readily follows that
\begin{equation}\label{eq.11}
|w^{\prime }_{y}-w_{y}|=\sum_{p_{i}\in P} \frac{1}{\prod_{w_j \text{ on } p_i} N^{R_{j}}} \times |\nabla \theta| \leqslant  \varepsilon.
\end{equation}
As discussed earlier, on a path $p_{i}$, $w_{s}$ passes through multiple model nodes to reach model $w_{y}$ with depth $d_{i}$. The number of inheritances per model is $N^{R_{j}}\geqslant 2$, with the corresponding $\frac{1}{N^{R_{j}}} \leqslant \frac{1}{2}$. For this path $p_{i}$, we have
\begin{equation}\label{eq.12}
\frac{1}{\prod_{w_j \text{ on } p_i} N^{R_{j}}}\leqslant \frac{1}{2^{d_{i}}}.
\end{equation}
Further, after passing through several model nodes, $w_{s}$  arrives at a model of depth $d_{i}$, yielding 
\begin{equation}\label{eq.13}
\frac{|\nabla\theta| }{\prod_{w_j \text{ on } p_i} N^{R_{j}}}\leqslant \frac{|\nabla\theta| }{2^{d_{i}}} \leqslant \varepsilon.
\end{equation}
For path $p_{i}$, we have
\begin{equation}\label{eq.14}
2^{d_{i}}\geqslant\frac{|\nabla\theta| }{\varepsilon} \Longrightarrow d_{i} \geqslant \lceil \log_{2}\frac{|\nabla \theta |}{\varepsilon } \rceil.
\end{equation}
For a set $P$ of all paths, we have
\begin{equation}\label{eq.15}
\begin{split}
   & \sum_{p_{i}\in P}\frac{|\nabla\theta| }{\prod_{w_j \text{ on } p_i} N^{R_{j}}}\leqslant \sum_{d_{i}\in D}\frac{|\nabla\theta| }{2^{d_{i}}}\leqslant \varepsilon 
   \Longrightarrow 
\sum_{d_{i}\in D} \frac{1}{2^{d_{i}}} \leqslant \frac{\varepsilon }{|\nabla \theta| }.
\end{split}
\end{equation}
If $\frac{1}{\prod_{w_j \text{ on } p_i} N^{R_{j}}} = \frac{1}{2^{d_{i}}}$ at this point, $d_{i}$ can be obtained as
\begin{equation}\label{eq.16}
d_{i} = 
\begin{cases} 
1, & \text{if } \log_{2}(\frac{|\nabla\theta| }{\varepsilon} )  < 0; \\
\lceil \log_{2}\frac{|\nabla \theta |}{\varepsilon } \rceil ,  & \text{if } \log_{2}(\frac{|\nabla\theta| }{\varepsilon} )  \in \mathbb{R}^+ \setminus \mathbb{Z}^+.
\end{cases}
\end{equation}
Theorem \ref{theorem:1} is proved.
\end{proof}

As dictated in Theorem \ref{theorem:1}, the minimum depth for completing a model unlearning task is determined by the variable $\nabla\theta$. Since $\nabla\theta$ is finite, there must exist a depth bound $d_{b}=\lceil \log_{2}\frac{|\nabla \theta |}{\varepsilon } \rceil $ such that models after $d_b$ do not need to be updated. When $\nabla\theta$ is smaller, it requires smaller depths for the model to stop updating. On the other hand, if $|\nabla\theta|$ is larger, it requires more depths for the model to stop updating. As mentioned earlier, the model update is related to $N^{R_{j}}$, while a bound of $N^{R_{j}}=2, \forall j$ is assumed in the proof. Therefore, when $\nabla \theta_{s}$ remains constant, the larger $N^{R_{j}}$ is, the smaller update depth bound is, and the smaller depths are required for the model to stop updating.

If there are model nodes that reference only one model for training throughout the FL training task, the depth at which the model stops updating is $d + c$. Here, $c$ represents the number of nodes inheriting only one model.

\begin{corollary} \label{corollary:1}
When multiple models of a user are used as starting points for model update, and there are interactions between these inherited models affected by the starting models, these inherited models also stop updating at a certain depth.
\end{corollary}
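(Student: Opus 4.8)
The plan is to reduce the multi-start update rule (\ref{eq.8}) to the single-start case already settled in Theorem~\ref{theorem:1} by exploiting linear superposition, and then to combine the resulting per-source depth bounds by taking a maximum. First I would apply the triangle inequality to (\ref{eq.8}) to separate the contributions of the individual starting gradients:
\begin{equation*}
|w^{\prime}_{y}-w_{y}| \leqslant \sum_{\nabla \theta^{k}_{s}\in \nabla \theta} \left| \sum_{p_i^{\nabla \theta^{k}_{s}}\in P^{\nabla \theta^{k}_{s}}} \frac{\alpha \nabla \theta^{k}_{s}}{\prod_{w_j \text{ on } p_i} N^{R_j}} \right|.
\end{equation*}
Each inner sum is precisely a single-start update of the form (\ref{eq.7}) driven by the gradient $\nabla\theta^{k}_{s}$, so the estimates behind Theorem~\ref{theorem:1} apply term by term.

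Second, I would invoke the per-path decay $\frac{1}{\prod_{w_j \text{ on } p_i} N^{R_j}}\leqslant \frac{1}{2^{d_i}}$ from (\ref{eq.12}), valid because $N^{R_{j}}\geqslant 2$, inside each inner sum. This bounds the $k$-th contribution by $|\nabla\theta^{k}|\sum_{d_i}2^{-d_i}$, where $|\nabla\theta^{k}|:=\alpha|\nabla\theta^{k}_{s}|$, a quantity that decays geometrically in the depth of $w_y$ measured from the $k$-th starting model. By the same reasoning that produces (\ref{eq.14}), the $k$-th contribution drops below $\varepsilon/h$, where $h$ is the number of starting gradients in $\nabla\theta=\{\nabla\theta^{1}_{s},\ldots,\nabla\theta^{h}_{s}\}$, once its depth exceeds $d_b^{k}=\lceil \log_{2}\frac{h|\nabla\theta^{k}|}{\varepsilon} \rceil$.

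Third, summing the $h$ contributions, each below $\varepsilon/h$, yields a total update below $\varepsilon$, which is exactly the stopping condition $|w^{\prime}_{y}-w_{y}|\leqslant \varepsilon$ introduced for the parallel paradigm. Setting $d_b^{\mathrm{total}}=\max_{k} d_b^{k}$ then gives a single finite depth beyond which no inherited model is updated, establishing the claim; finiteness follows because $\nabla\theta$ is a finite set. The main obstacle I anticipate is handling the ``interactions'' between inherited models cleanly: a single inherited node may be reachable from several starting models at different depths and along several paths, so I must fix an unambiguous notion of depth for the combined bound, which I do by bounding each source's contribution by its own depth and only then combining. A related subtlety is that the number of paths in the DAG may grow with depth, but this is already absorbed by the geometric decay in Theorem~\ref{theorem:1}, since the summed reciprocal powers of two remain bounded by (\ref{eq.10}). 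Finally, I would remark that the parallel paradigm propagates the original starting gradients in a single feed-forward pass through the inheritance DAG, so no amplifying feedback arises among the interacting inherited models, which justifies the linear superposition used throughout.
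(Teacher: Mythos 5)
Your proposal follows essentially the same route as the paper's proof: decompose the multi-start update (\ref{eq.8}) into per-source contributions (the paper's $B^{k}_{s}$), apply Theorem~\ref{theorem:1} to each term, and conclude that the accumulated update falls below the threshold at a finite depth. In fact your version is slightly more careful than the paper's: the paper infers from $B^{k}_{s}\leqslant\varepsilon$ for each $k$ that $B=\sum_{k}B^{k}_{s}\leqslant\varepsilon$, which as written only yields $B\leqslant h\varepsilon$, whereas your $\varepsilon/h$ per-source thresholds and the combined depth bound $\max_{k}\lceil \log_{2}\frac{h|\nabla\theta^{k}|}{\varepsilon}\rceil$ genuinely deliver the claimed stopping condition $|w^{\prime}_{y}-w_{y}|\leqslant\varepsilon$.
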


\begin{proof}
According to (\ref{eq.8}), assuming model $w_{y}$ is influenced by multiple distinct gradient model updates, we can represent $B^{k}_{s}$ the gradient ascent required for $w_{y}$ under the influence of gradient $\nabla \theta^{k}_{s}$. Here, $B^{k}_{s}=\sum_{p^{\nabla \theta^{k}_{s} }_{i}\in P^{\nabla \theta^{k}_{s}}} \frac{1}{\prod\nolimits{{w_j \text{ on } p_i}} N^{R_{j}}} \times \alpha \nabla \theta^{k}_{s}$. Since multiple starting models are updated simultaneously, the gradient ascent process accumulates individual gradient ascent values. In other words, $B=B^{1}_{s}+B^{2}_{s}+\cdots+B^{h}_{s}$. By Theorem \ref{theorem:1}, the inherited models influenced by the starting models eventually stop updating when a certain depth is reached. As $\prod\nolimits_{w_j \text{ on } p_i} N^{R_{j}}$ increases, $B^{k}_{s} \leqslant \varepsilon $, resulting in $B \leqslant\varepsilon$.
\end{proof}


\noindent\textbf{Parallel unlearning in BlockFUL.} Users broadcast the updated model set \( W^{\prime} \) and associated gradient messages to the network (Line 3, Algorithm~\ref{Algo.1}), after which the algorithm iterates through \( W^{\prime} \) using \textit{parallel execution loops} (Line 4). The \textit{committee} processes the updated models \( W \) and their inherited models \( N(W) \), applying \textit{gradient ascent updates} (Lines 5-8), which are then executed on the blockchain. Next, the committee utilizes the \textit{reference list} \( R_{Tx} \) to establish reference relations for all transactions in the \textit{BlockFUL} task and, using \( CH_{sk} \), overwrites relevant transactions and versions on the blockchain (Lines 9-14) until all models in \( W^{\prime} \) are processed. Finally, the committee verifies the \textit{validity of model updates} (Lines 15-21), requiring only a \textit{single consensus operation} for \textit{parallel unlearning} to update both the FL model and blockchain. A \textit{unified hash value} is computed from all prior updates, triggering an \textit{alarm} and exiting the update block if verification fails; otherwise, the update is finalized, and the new models and blockchain data are officially recorded.

\begin{algorithm}[t]
\footnotesize
\caption{Parallel unlearning} \label{Algo.1}
\SetAlgoLined

\textbf{Define} 
Receiver $\leftarrow$ Sender.Send(Message); 

\CommentSty{// This transmission is secured by any private and encrypted channels, e.g., TLS.} \\
\textbf{Input:} $G=(V,E)$, start model set $W^{\prime }$,  start model $w^{\prime }_{i} \in W^{\prime }$, model gradient set $\nabla \theta$, discount factor $\alpha$

 \For{$w^{\prime }_{i}$ \KwTo $W^{\prime }$ parallel}{
     Send model\_update message($w^{\prime }_{i}$); 

         \For{$w_{j}$ \KwTo $N(W)$}{
         $w^{\prime }_{j}$ $\leftarrow$ Committee.Update($w_{j}$, $\nabla \theta$, $\alpha$);
            
          \CommentSty{//  Update the model weights influenced with $W^{\prime}$ by (\ref{eq.8})}. \\
         $v^{\prime }_{i,j}$ $\leftarrow$  Committee.Overwrite$(\mathcal{T}(i,j)$, 
   
   $\mathcal{T}^{\prime }(i,j)$, $v_{i,j}, 
      CH_{sk}$); \\
$R^{\prime}_{curr}$ $\leftarrow$ Committee.Overwrite($version$, 
   
   $new$ $version$, $R_{curr}, CH_{sk}$); \\      
      }
       
        }

         \While{Consensus}{
           Committee.Verify$(Hash(W^{\prime }, N(W^{\prime })), v^{\prime }_{i,j}, R^{\prime}_{curr}) $\; 
    \If{$Hash(W^{\prime }, N(W^{\prime }),v^{\prime }_{i,j}, R^{\prime}_{curr})$ $is$ $invalid$}{
     alarm and exit updated blocks\; 
      update the corresponding parameters in IPFS\;
         }}
\end{algorithm}


\subsection{Paradigm 2 -- Sequential Unlearning}

\textit{Sequential unlearning} refers to a process where committee members execute a consensus process for each model update, i.e., a single DAG node, within an unlearning task. After updating one model and reaching a consensus, the results are recorded on the chain before moving on to the next model update. 
By updating models one by one and recording them on the chain, sequential unlearning ensures the compliance and procedural correctness of each operation.
We employ the re-training method to demonstrate this process, with the same user model settings adopted as in gradient ascent. The re-training starts from the updated model $w^{\prime }_{s}$ and goes through all paths to the model~$w_{y}$.

\subsubsection{Re-aggregation} Based pre-aggregation in (\ref{eq.2}), the re-aggregation model is expressed as
\begin{equation}\label{eq.17}
\tilde{w}^{\prime }_{y} =\frac{1}{N^{R_{y}}} W^{\prime }_{y},
\end{equation}
where $W^{\prime }_{y}$ denotes the updated set of the original selected model weights. 

\subsubsection{Re-training} Based on training in (\ref{eq.3}), the users associated with these paths re-train the model on their own local clients. The inherited model $w_{y}$ is updated accordingly as
\begin{equation}\label{eq.18}
w^{\prime }_{y}=\daleth_{y} \left( \tilde{w}^{\prime }_{y} ,\phi_{y} ,D^{train}_{y}\right).  
\end{equation}

\noindent\textbf{Sequential unlearning in BlockFUL.} The client sequentially applies unlearning operations to each model in \( W^{\prime} \) (Line 3, Algorithm~\ref{Algo.2}). For the current model \( w_j \), the client checks if it is the starting model. If \( w_j \neq w_{i}^{\prime} \), the client locally performs re-aggregation (\ref{eq.17}) and re-training (\ref{eq.18}) (Lines 5–8), then broadcasts the updated model to the network (Line 9). The \textit{committee}, using \( CH_{sk} \) in the blockchain, executes the same update operations as in Algorithm~\ref{Algo.1} (Lines 10–12) and initiates the consensus verification process (Lines 13–20). Upon reaching consensus, the committee updates the blocks and parameters in IPFS and forwards model update messages to the next-hop clients to update the inherited models \( N^1(W) \), \( N^2(W) \), and so on, ensuring all inherited models in \( N(W) \) are updated (Line 21).

\begin{algorithm}[t]
\footnotesize
\caption{Sequential unlearning} \label{Algo.2}
\SetAlgoLined

\textbf{Define} 
\textbf{All functions inherited from Algorithm \ref{Algo.1}}; \\
\textbf{Input:} $G=(V,E)$, start model set $W^{\prime }$,  start model $w^{\prime }_{i} \in W$\\

\For{$w^{\prime }_{i}$ \KwTo $W^{\prime }$}{
  \For{$w_{j}$ \KwTo $N(W)$}{
   \If{ $w_{j}$ $\neq$ $w^{\prime }_{i}$  }{
  Re-aggregation according to (\ref{eq.17}); \\
    Re-training according to (\ref{eq.18})\;}
 Send model\_update message($w^{\prime }_{i}$); 
     \\
      $v^{\prime }_{i,j}$ $\leftarrow$ Committee.Overwrite($\mathcal{T}(i,j)$, 
   $\mathcal{T}^{\prime }(i,j)$, $v_{i,j}, 
      CH_{sk}$); \\
$R^{\prime }_{curr}$ $\leftarrow$ Committee.Overwrite($version$, 
   
   $new$ $version$, $R_{curr}, CH_{sk}$);    
 
  \While{Consensus}{
           Committee.Verify($W^{\prime }, v^{\prime }_{i,j},R^{\prime }_{curr}$)\; 
    \If{$W^{\prime },v^{\prime }_{i,j},R^{\prime }_{curr}$ $is$ $invalid$}{
     alarm and exit \\
     update blocks\; 
      update the corresponding parameters in IPFS\;
        }
   } 
  
    \textbf{Parallel }Clients($N^{1}\left( W\right)$) $\leftarrow$ model\_update message($W^{\prime }$); 
    }
    } 

\end{algorithm}


\subsection{Analysis and Comparison}
The complexity and overhead analyses of FL and blockchain operations further highlight their computational feasibility and practical implementation.

\subsubsection{FL Complexity Analysis}

In FL networks, the computational cost of \textit{sequential updates} increases linearly with the number of requests \( Q \), leading to a complexity of \( \mathcal{O} \left( QdN^{R}|D|\times|S| \right) \), where \( d \) is the depth, \( D \) is the dataset size, and \( S \) is the number of model parameters. In contrast, \textit{parallel updates} depend primarily on depth \( d \) and remain independent of \( Q \), with a complexity of \( \mathcal{O} \left( dN^R|S| \right) \).

Considering energy consumption, assume there are \( L \) blocks and \( K \) models to be updated, including \( Q \) starting models and \( (K-Q) \) inherited models. There are $M$ nodes participating in the consensus. The energy cost for sequential updates is \( QP_{seq}\times \mathcal{O} \left( dN^R|D|\times |S| \right) \), while for parallel updates, it is given by \( QP_{seq}\times \mathcal{O} \left( |D|\times |S| \right)+(K-Q)MP_{para}\times \mathcal{O} \left( dN^R|S| \right) \), where \( P_{seq} \) and \( P_{para} \) are the computational power for sequential and parallel updates, respectively.

\subsubsection{Blockchain Complexity Analysis}
The update is executed every time a node is traversed, regardless of whether or not unlearning has been performed on the node. 
The number of consensus processes varies based on the unlearning techniques used: a single consensus occurs in parallel unlearning, whereas multiple consensuses are required in sequential unlearning. The computational overheads of the unlearning paradigms on the archive chain are unaffected by model updates.

The transmission cost $C_{tran}$ represents the cost associated with a single operation of uploading $C^{up}_{tran}$ or downloading $C^{down}_{tran}$ a model. The committee consensus cost is $C_{con}$. The cost of a single CH update operation is denoted by $C_{CH}$.

\smallskip
 The analysis of parallel unlearning costs is as follows:

\noindent\textbf{CH cost.} The cost of CH includes the cost of transactions CH $C^{Tx}_{CH}$, and the cost of blocks CH $C^{block}_{CH}$ in the blockchain. Therefore, the CH cost of parallel unlearning is $C^{Pa}_{CH}=K C^{Tx}_{CH}+N C^{block}_{CH}=(K+L) C_{CH}$.

\noindent\textbf{Block consensus cost.} The committee executes the entire parallel unlearning process, enabling consensus to be reached for multiple transactions or blocks simultaneously. The consensus cost of parallel unlearning is $C^{Pa}_{con}=C_{con}$.

\noindent\textbf{Transmission cost.} The transmission cost of the model is primarily incurred at the consensus nodes for uploading and downloading. Since the consensus nodes can perform uploading and downloading tasks in parallel, the transmission cost of the model is $C^{Pa}_{tran}=K C^{up}_{tran}+K C^{down}_{tran}=2K C_{tran}$.

For parallel operations, the blockchain overhead is 
\begin{equation}\label{eq.parallel consumption}
(K+L) C_{CH}+C_{con}+2K C_{tran},
\end{equation}
which can also derive the estimated energy consumption: 
\begin{equation}
    \left( K+L \right)M E_{CH}+ME_{con}+2K|S| E_{tran},
\end{equation}
where $E_{CH}$ is the CH energy consumption; $E_{con}$ represents the energy consumption per node for consensus among $M$ participants, $E_{tran}$ is the transmission energy consumption per unit of data size, and $S$ is the number of model parameters.

\smallskip
The analysis of sequential unlearning costs is as follows:
 
\noindent\textbf{CH cost.} In this process, each transaction update requires one transaction CH and one block CH. The CH cost of sequential unlearning is $C^{Se}_{CH}=K C^{Tx}_{CH}+K C^{block}_{CH}=2K C_{CH}$.

\noindent\textbf{Block consensus cost.} Each model update requires one round of consensus. The consensus cost for sequential unlearning is $C^{Se}_{con}=K C_{con}$.

\noindent\textbf{Transmission cost.} The transmission cost of the model primarily involves the uploading and downloading processes by local clients, as well as the downloading process by consensus nodes. Suppose that the average number of inheritances per model is $N^{R}$. The transmission cost for sequential unlearning is $C^{Se}_{tran}=K C^{up}_{tran}+N^{R} (K-1)C^{down}_{tran}+K C^{down}_{tran}=2K C_{tran}+N^{R} (K-1) C_{tran}$.

For sequential operations, the blockchain overhead is
\begin{equation}\label{eq.parallel consumption}
2K C_{CH}+K C_{con}+2K C_{tran}+N^R (K-1) C_{tran},
\end{equation}
which can also derive the estimated energy consumption: 
\begin{equation}\label{eq.parallel consumption}
2KME_{CH}+KME_{con}+2K|S| E_{tran}+N^{R}\left( K-1 \right) |S| E_{tran}.
\end{equation}

\smallskip
Additionally, we assess other metrics, including \textit{blockchain overhead} in cases where each transaction forms an individual block (\( SC_{block} \)) and where multiple transactions are grouped within a single block (\( MC_{block} \)). We also examine the impact of varying training dataset sizes (\( |D| \)), considering factors such as depth (\( d \)) and parameter count (\( |S| \)).

One or more transaction updates go through CH overhead $C_{CH}$. Similarly, the main overhead is the same as the transaction overhead for one block or multiple block updates. Suppose that there are $K$ model updates. The costs can be given by $SC_{block}=K (C^{block}_{CH}+C^{Tx}_{CH})=2K C_{CH}$. In the case of multiple transactions corresponding to a single block, suppose there are $K$ model updates, then the update cost in this block is $MC_{block}=C^{block}_{CH}+K C^{Tx}_{CH}=(K+1) C_{CH}$.

In large-scale datasets and complex models, computational cost limitations manifest in several ways. First, as dataset size $|D|$ and model complexity (e.g., depth $d$ and parameter count $|S|$) increase, computational costs rise significantly, especially in sequential updates, where complexity grows linearly with the number of requests $Q$, reducing system efficiency. Second, in blockchain systems, the consensus overhead $\mathcal{O} \left( C_{con} \right)$ and on-chain processing overhead $\mathcal{O} \left( C_{CH} \right)$ further add to the computational burden. Finally, while parallel updates can reduce computational complexity, they rely on hardware parallelization, which may not be fully exploited in resource-constrained environments.

\subsubsection{Comparison}

The design of parallel and sequential unlearning paradigms aligns with real-world applications, where the choice of updating entity significantly impacts computational costs. In financial prediction models, parallel unlearning facilitates rapid updates from multiple market data sources, enabling real-time trend monitoring and dynamic portfolio adjustments to respond effectively to market fluctuations. Conversely, in medical compliance and record management, unlearning is infrequent but critical, often triggered by data lifecycle expiration or regulatory changes, requiring strict compliance and structured execution, making sequential unlearning the preferred approach.

The method of executing updates also affects efficiency. If individual users perform updates, each must upload their modified model to the blockchain and notify the next user, necessitating separate consensus processes and incurring significant transmission overhead. As the number of updated models surpasses the number of block updates, the cumulative cost of CH in the blocks exceeds that of collective updates managed by a central committee. Delegating updates to the committee reduces computational overhead on clients, optimizing consensus and transmission costs as model numbers grow. This allows the committee to maintain a more efficient balance between system performance and cost management, ensuring scalable and streamlined unlearning processes.

\section{Experiments}\label{section:5}

In the experiment, we establish a testbed to assess the effectiveness of
parallel and sequential unlearning, and evaluate their impact 
on accuracy for both unlearned and retained labels, incorporating classic models, such as ResNet18 and MobileNetV2. We also evaluate the overhead of interacting with the dual-chain structure.

We align the gradient ascent-based parallel unlearning and re-training-based sequential unlearning. This alignment is reasonable since each method aptly reflects the characteristics of its respective unlearning paradigm: re-training, which requires sequentially fetching and training each model, and gradient ascent, which facilitates parallel processing due to its one-off fetching capability from the live chain\footnote{Other unlearning methods, such as fine-tuning and distillation, can also be implemented within these paradigms. For the sake of clarity and focus, we select representative methods for our evaluation in this paper.}.

\subsection{Evaluation Metrics}
We focus on class-level unlearning, though BlockFUL supports class-, client-, and sample-level unlearning using methods like re-training and gradient ascent~\cite{liu2020federated,liu2022backdoor}. We evaluate accuracy on training data to check if removed information still affects the model, following standard practice~\cite{lastlayer}.

\begin{packeditemize}
    \item\textbf{Accuracy on the unlearned dataset ($AD_f$).} 
The accuracy of an unlearned dataset in the unlearned model ideally is close to zero. This aligns with \textbf{G1} by evaluating the success of unlearning through the model’s inability to predict labels corresponding to the unlearned dataset accurately.

  \item\textbf{Accuracy on the retained dataset ($AD_r$).} The accuracy of a retained dataset of the unlearned model. It is expected to be close to the performance of the original model. This aligns with \textbf{G2} by maintaining comparable accuracy on retained labels corresponding to the retained dataset.

  \item\textbf{Cumulative unlearning time.} The cumulative time required for each model to unlearn labels during the training process in machine learning and deep learning.
   \item\textbf{Blockchain overhead.} This includes the time overhead
required for CH and consensus in the unlearning process
on the blockchain, as well as the time overhead for
processing single or multiple transactions within a block.

   \item\textbf{Energy consumption.} This mainly includes the energy consumption of the FL network model updates and the energy consumption of the blockchain network.

  \item\textbf{Latency.} This includes the cumulative unlearning time and the time spent on blockchain and CH operations.

\end{packeditemize}

\begin{table*}
\centering
\setlength{\extrarowheight}{0pt}
\addtolength{\extrarowheight}{\aboverulesep}
\addtolength{\extrarowheight}{\belowrulesep}
\setlength{\aboverulesep}{0pt}
\setlength{\belowrulesep}{0pt}
\caption{\small Unlearning Performance on CIFAR-10}
\label{table:1}
\renewcommand\arraystretch{0.6}
\resizebox{1\textwidth}{!}{

}
\end{table*}

\subsection{Datasets}

\begin{packeditemize}
    \item \textbf{CIFAR-10:} 
    Comprises 60,000 32x32 color images in 10 different classes, with 6,000 images per class. The dataset is split into 50,000 training and 10,000 testing images. 

    \item \textbf{Fashion-MNIST:}  
    Comprises 10 fashion categories, each comprising 60,000 grayscale images with a dimension of 28x28 pixels. The training set contains 55,000 images, while the test set includes 10,000 images. 

    \item \textbf{Yahoo! Answers:}  Contains approximately 1.4 million entries across 10 categories, offering a valuable resource for natural language processing (NLP) tasks. 

    \item \textbf{TinyImageNet:} The dataset contains 200 classes, with 500 images per class in the training set and an additional 10,000 images in the test set.
\end{packeditemize}

\subsection{Implementation of Unlearning Paradigms}

Each traversal triggers an ``unlearning'' at every node, regardless of whether it has been unlearned previously or not. We employ the following unlearning methods:

\smallskip
\noindent\textbf{Re-training (Sequential).}
The model is re-trained sequentially on the remaining data after removing the unlearned data, following the inheritance path.

\smallskip
\noindent\textbf{Gradient ascent (Parallel).}
We calculate gradient differences from initial unlearning and apply them to inherited models in parallel, enabling efficient unlearning and analysis of its downstream effects.

\begin{figure}[t]
\centering
\includegraphics[width=0.8\linewidth]{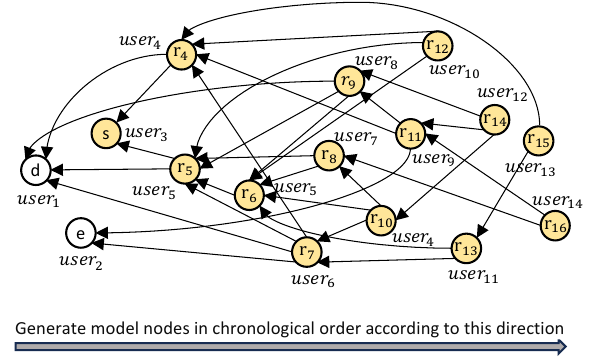}
	\caption{\small The experiment considers 14 users, and each arrow represents the aggregated model. The yellow color indicates unlearned data models and their inheritance models.}
 \label{unlearninggraph}
\end{figure}

\subsection{Experimental Settings}
We conduct experiments using an NVIDIA GeForce RTX 3070 GPU (Tables~\ref{table:1} and~\ref{table:overall}) and an RTX 4090 GPU (Tables~\ref{table:IID-NON}-\ref{table:fedrated_TinyImageNet} and Section~\ref{Scalability}). We employ an untrained ResNet18 model, MobileNetV2 model, Bert-Base-Cased model, GCNN model, and AlexNet model, all fine-tuned on specific datasets. The to-be-unlearned classes during the process are represented as \#$C_f$.

\begin{packeditemize}
    \item\noindent \textbf{Model Inheritance:}
A group of 14 users is evaluated, as depicted in Fig. \ref{unlearninggraph}, where $user_5$ and $user_7$ have the same unlearning labels. 
The models highlighted in yellow are those that demonstrate accuracy in unlearned data.
   
\item\noindent \textbf{Blockchain:} In the live chain, we use Byzantine Fault Tolerance (BFT)-based consensus to ensure data consistency and security. Additionally, we set the block packaging time to 15 seconds. This interval strikes a balance between transaction processing speed and network stability. Each block is sized at 1MB to ensure efficient network transmission and storage. 

\item\noindent \textbf{Hyperparameters:}  In both CIFAR-10 and Fashion-MNIST setups, each model undergoes rigorous training for 100 epochs with a learning rate of 0.005. 

\item\noindent \textbf{Label categories: }  We apply the unlearning process to both single and multiple category labels (such as combinations of 2, 4, etc., category labels). Here, $r_5$, $r_6$, and $r_8$ represent the models randomly targeted for the unlearning process.
\end{packeditemize}

\subsection{Evaluating Unlearning Performance}

We evaluate re-training and gradient ascent by comparing \( AD_f \) and \( AD_r \), corresponding to design goals \textbf{G1} and \textbf{G2}, while assessing single- and multi-label unlearning to address \textbf{G3}. Tables~\ref{table:1} and~\ref{table:overall} report results on CIFAR-10 and Fashion-MNIST using AlexNet, which shows shorter re-training time and better \( AD_r \) than ResNet18, indicating that smaller models may unlearn more effectively via re-training.

\subsubsection{Unlearning effectiveness of single-label unlearning}

On CIFAR-10 and Fashion-MNIST, re-training consistently achieves perfect unlearning ($AD_f$=0) with high retention ($AD_r$ up to 92.77\% and 95.27\%, respectively). In contrast, gradient ascent yields moderate unlearning ($AD_f$ around 7\% and 3\%) but lower retention ($AD_r$of 51.09\% and 83.4\%). In single-label unlearning, gradient ascent matches re-training in $AD_f$, effectively forgetting the target label. However, its lower $AD_r$ indicates weaker retention of other labels.

\subsubsection{Unlearning effectiveness of multi-label unlearning} When \#$C_f $ (i.e., the number of label categories to be unlearned) is set to 2, 4, and 7 across all models on CIFAR-10, the $AD_f$ metric for gradient ascent is 6.64\%, and the average $AD_r$ is 48.77\%. On Fashion-MNIST, the experimental results show that the $AD_f$ for gradient ascent is 1.64\%, and the average $AD_r$ is 61.1\%. Gradient ascent performs similarly to re-training on $AD_f$, but its declining $AD_r$ suggests that focusing on unlearning labels hampers retention of other labels, affecting overall performance.

Different models and datasets balance computational cost and accuracy differently. CIFAR-10, with its larger dataset, demands more re-training resources than Fashion-MNIST but achieves higher accuracy. Model complexity also matters—ResNet18, being more complex than MobileNetV2, requires higher re-training costs but delivers superior accuracy. In gradient ascent, ResNet18 maintains its accuracy advantage over MobileNetV2 despite similar computational costs, highlighting the benefits of model complexity in this method. Thus, model and dataset selection should match task requirements: MobileNetV2 suits resource-constrained scenarios (e.g., real-time applications), while ResNet18 is ideal for accuracy-critical tasks (e.g., image recognition).

\vspace{-0.4em}
\begin{center}
\fbox{%
\begin{minipage}{0.98\linewidth}
\change{\textbf{Takeaway—balancing unlearning accuracy and scalability.}
Re-training offers perfect forgetting with
high retention, particularly for lightweight models like AlexNet. Gradient ascent, while less effective
in retention, enables cost-efficient unlearning with acceptable forgetting performance, and benefits
from model complexity. These results support flexible method selection based on task constraints
such as accuracy demand and computational resources.
}
\end{minipage}
}
\end{center}

\change{
\subsubsection{Non-IID and IID} Tables~\ref{table:IID-NON} and~\ref{table:IID-NON-bert} present experimental results under IID and non-IID data distributions, where client data is either sample-non-IID or label-non-IID. 
In the sample-non-IID scenario, clients share the same classes but with different data within each class; in the label-non-IID scenario, clients have disjoint class distributions (e.g., client 1 does not have classes 70-74, and client~2 does not have classes 60-64). For IID data, clients 1 and 2 share both labels and data.
It is observed that the GA method effectively achieves unlearning in most cases, with particularly better performance under the sample-non-IID setting. As \#$C_f$ increases, GA accuracy under IID data drops significantly because sample-non-IID maintains greater data diversity, enhancing model performance, whereas label-non-IID’s disjoint class distribution can negatively impact learning.
}

\change{Tables~\ref{table:IID-NON-cifar-compare} and~\ref{table:IID-NON-bert-compare} present the experimental results of non-IID unlearning performance under different label settings on CIFAR-100 (ResNet18) and Yahoo! Answers (Bert-Base-Cased), respectively. We observe that as the number of label-non-IID labels increases (e.g., in the CIFAR-100 dataset (ResNet18), where the number of inconsistent labels increases from 5 to 15), the unlearning performance remains relatively high across different models and datasets.
}

Beyond re-training and gradient ascent, we explored distillation and fine-tuning. As shown in Tables~\ref{table:2} and~\ref{table:11}, distillation slightly underperforms re-training in accuracy while incurring higher time overhead, and fine-tuning proves ineffective. Fine-tuning updates only the final layers of a pre-trained model on a dataset excluding the unlearned data. To improve unlearning and minimize $AD_f$, we applied a large learning rate, effectively removing unlearned data but causing instability due to distribution shifts, making it harder to retain essential information~\cite{li2017learning}.

\begin{table*}
\centering
\setlength{\extrarowheight}{0pt}
\addtolength{\extrarowheight}{\aboverulesep}
\addtolength{\extrarowheight}{\belowrulesep}
\setlength{\aboverulesep}{0pt}
\setlength{\belowrulesep}{0pt}
\caption{\small Fine-tuning and Distill Unlearning Performance on CIFAR-100}
\label{table:2}
    \renewcommand\arraystretch{0.6}
\resizebox{1\textwidth}{!}{
\begin{tabular}{ccc|ccccccccc|cccccc} 
\toprule
\multirow{3}{*}{\textbf{Model}}     & \multirow{3}{*}{\textbf{\#$C_f$}} & \multirow{3}{*}{\textbf{Metrics}}                 & \multicolumn{3}{c|}{\multirow{2}{*}{Original (\%)}}                                                                   & \multicolumn{3}{c|}{\multirow{2}{*}{Fine-tuning (\%)}}                                                              & \multicolumn{3}{c|}{\multirow{2}{*}{Distill (\%)}}                                                                 & \multicolumn{6}{c}{\textbf{Cumulative Unlearning Time (s)}}                                                                                              \\ 
\cmidrule{13-18}
                                    &                                   &                                                   & \multicolumn{3}{c|}{}                                                                                                 & \multicolumn{3}{c|}{}                                                                                               & \multicolumn{3}{c|}{}                                                                                              & \multicolumn{3}{c|}{Fine-tuning}                                           & \multicolumn{3}{c}{Distill}                                                 \\ 
\cmidrule{4-18}
                                    &                                   &                                                   & $r_5$                                 & $r_6$                                 & $r_8$                                 & $r_5$                                & $r_6$                                & $r_8$                                 & $r_5$                                & $r_6$                                & $r_8$                                & $r_5$                 & $r_6$                 & \multicolumn{1}{c|}{$r_8$} & $r_5$                  & $r_6$                   & $r_8$                    \\ 
\midrule
\multirow{4}{*}{\rotcell{AlexNet}}  & \multirow{2}{*}{1}                & $A_{D_r}\uparrow$                                   & 99.99                                 & 99.99                                 & 99.99                                 & 1.03                                 & 1.03                                 & 3.21                                  & 81.74                                & 86.25                                & 86.76                                & \multirow{2}{*}{0.85} & \multirow{2}{*}{1.16} & \multirow{2}{*}{2.45}      & \multirow{2}{*}{83.14} & \multirow{2}{*}{154.61} & \multirow{2}{*}{296.74}  \\
                                    &                                   & {\cellcolor[rgb]{0.922,0.922,1}}$A_{D_f}\downarrow$ & {\cellcolor[rgb]{0.922,0.922,1}}99.99 & {\cellcolor[rgb]{0.922,0.922,1}}99.99 & {\cellcolor[rgb]{0.922,0.922,1}}99.99 & {\cellcolor[rgb]{0.922,0.922,1}}0.00 & {\cellcolor[rgb]{0.922,0.922,1}}0.00 & {\cellcolor[rgb]{0.922,0.922,1}}0.00  & {\cellcolor[rgb]{0.922,0.922,1}}0.00 & {\cellcolor[rgb]{0.922,0.922,1}}0.00 & {\cellcolor[rgb]{0.922,0.922,1}}0.00 &                       &                       &                            &                        &                         &                          \\ 
\cline{2-18}
                                    & \multirow{2}{*}{10}               & $A_{D_r}\uparrow$~                                  & 99.99                                 & 99.99                                 & 99.99                                 & 1.13                                 & 4.77                                 & 67.04                                 & 80.44                                & 81.07                                & 91.32                                & \multirow{2}{*}{0.84} & \multirow{2}{*}{2.14} & \multirow{2}{*}{3.31}      & \multirow{2}{*}{76.31} & \multirow{2}{*}{149.60} & \multirow{2}{*}{329.95}  \\
                                    &                                   & {\cellcolor[rgb]{0.922,0.922,1}}$A_{D_f}\downarrow$ & {\cellcolor[rgb]{0.922,0.922,1}}99.99 & {\cellcolor[rgb]{0.922,0.922,1}}99.99 & {\cellcolor[rgb]{0.922,0.922,1}}99.99 & {\cellcolor[rgb]{0.922,0.922,1}}0.00 & {\cellcolor[rgb]{0.922,0.922,1}}0.00 & {\cellcolor[rgb]{0.922,0.922,1}}52.73 & {\cellcolor[rgb]{0.922,0.922,1}}0.00 & {\cellcolor[rgb]{0.922,0.922,1}}0.00 & {\cellcolor[rgb]{0.922,0.922,1}}0.00 &                       &                       &                            &                        &                         &                          \\ 
\midrule
\multirow{4}{*}{\rotcell{ResNet18}} & \multirow{2}{*}{1}                & $A_{D_r}\uparrow$                                   & 99.99                                 & 99.99                                 & 99.99                                 & 1.03                                 & 2.44                                 & 1.03                                  & 84.93                                & 84.93                                & 81.32                                & \multirow{2}{*}{1.04} & \multirow{2}{*}{2.32} & \multirow{2}{*}{3.74}      & \multirow{2}{*}{71.31} & \multirow{2}{*}{131.44} & \multirow{2}{*}{250.61}  \\
                                    &                                   & {\cellcolor[rgb]{1,0.992,0.859}}$A_{D_f}\downarrow$ & {\cellcolor[rgb]{1,0.992,0.859}}99.99 & {\cellcolor[rgb]{1,0.992,0.859}}99.99 & {\cellcolor[rgb]{1,0.992,0.859}}99.99 & {\cellcolor[rgb]{1,0.992,0.859}}0.00 & {\cellcolor[rgb]{1,0.992,0.859}}0.00 & {\cellcolor[rgb]{1,0.992,0.859}}0.00  & {\cellcolor[rgb]{1,0.992,0.859}}0.00 & {\cellcolor[rgb]{1,0.992,0.859}}0.00 & {\cellcolor[rgb]{1,0.992,0.859}}0.00 &                       &                       &                            &                        &                         &                          \\ 
\cline{2-18}
                                    & \multirow{2}{*}{10}               & $A_{D_r}\uparrow$                                   & 99.99                                 & 99.99                                 & 99.99                                 & 6.87                                 & 1.68                                 & 2.04                                  & 81.02                                & 79.76                                & 91.04                                & \multirow{2}{*}{1.85} & \multirow{2}{*}{2.54} & \multirow{2}{*}{3.84}      & \multirow{2}{*}{76.12} & \multirow{2}{*}{152.33} & \multirow{2}{*}{270.34}  \\
                                    &                                   & {\cellcolor[rgb]{1,0.992,0.859}}$A_{D_f}\downarrow$ & {\cellcolor[rgb]{1,0.992,0.859}}99.98 & {\cellcolor[rgb]{1,0.992,0.859}}99.99 & {\cellcolor[rgb]{1,0.992,0.859}}99.99 & {\cellcolor[rgb]{1,0.992,0.859}}2.61 & {\cellcolor[rgb]{1,0.992,0.859}}0.00 & {\cellcolor[rgb]{1,0.992,0.859}}0.00  & {\cellcolor[rgb]{1,0.992,0.859}}0.00 & {\cellcolor[rgb]{1,0.992,0.859}}0.00 & {\cellcolor[rgb]{1,0.992,0.859}}0.00 &                       &                       &                            &                        &                         &                          \\
\bottomrule
\end{tabular}
}
\end{table*}
\begin{table*}
\centering
\setlength{\extrarowheight}{0pt}
\addtolength{\extrarowheight}{\aboverulesep}
\addtolength{\extrarowheight}{\belowrulesep}
\setlength{\aboverulesep}{0pt}
\setlength{\belowrulesep}{0pt}
\caption{\small Re-training and Gradient Ascent Unlearning Performance on CIFAR-100}
\label{table:11}
\renewcommand\arraystretch{0.6}
\resizebox{1\textwidth}{!}{
\begin{tabular}{ccc|ccccccccc|cccccc} 
\toprule
\multirow{3}{*}{\textbf{Model}} & \multirow{3}{*}{\textbf{\#$C_f$}} & \multirow{3}{*}{\textbf{Metrics}}                 & \multicolumn{3}{c|}{\multirow{2}{*}{Original (\%)}}                                                                   & \multicolumn{3}{c|}{\multirow{2}{*}{Re-training (\%)}}                                                             & \multicolumn{3}{c|}{\multirow{2}{*}{Gradient Ascent (\%)}}                                                            & \multicolumn{6}{c}{\textbf{Cumulative Unlearning Time (s)}}                                                                                           \\ 
\cmidrule{13-18}
                                &                                   &                                                   & \multicolumn{3}{c|}{}                                                                                                 & \multicolumn{3}{c|}{}                                                                                              & \multicolumn{3}{c|}{}                                                                                                 & \multicolumn{3}{c|}{Re-training}                                             & \multicolumn{3}{c}{Gradient Ascent}                                    \\ 
\cmidrule{4-18}
                                &                                   &                                                   & $r_5$                                 & $r_6$                                 & $r_8$                                 & $r_5$                                & $r_6$                                & $r_8$                                & $r_5$                                 & $r_6$                                 & $r_8$                                 & $r_5$                  & $r_6$                  & \multicolumn{1}{c|}{$r_8$} & $r_5$                 & $r_6$                 & $r_8$                  \\ 
\midrule
\multirow{4}{*}{AlexNet}        & \multirow{2}{*}{1}                & $A_{D_r}\uparrow$                                   & 99.99                                 & 99.99                                 & 99.99                                 & 99.99                                & 99.99                                & 99.99                                & 99.13                                 & 92.69                                 & 96.32                                 & \multirow{2}{*}{30.14} & \multirow{2}{*}{59.53} & \multirow{2}{*}{91.31}     & \multirow{2}{*}{1.34} & \multirow{2}{*}{1.85} & \multirow{2}{*}{2.42}  \\
                                &                                   & {\cellcolor[rgb]{0.922,0.922,1}}$A_{D_f}\downarrow$ & {\cellcolor[rgb]{0.922,0.922,1}}99.99 & {\cellcolor[rgb]{0.922,0.922,1}}99.99 & {\cellcolor[rgb]{0.922,0.922,1}}99.99 & {\cellcolor[rgb]{0.922,0.922,1}}0.00 & {\cellcolor[rgb]{0.922,0.922,1}}0.00 & {\cellcolor[rgb]{0.922,0.922,1}}0.00 & {\cellcolor[rgb]{0.922,0.922,1}}1.38  & {\cellcolor[rgb]{0.922,0.922,1}}0.00  & {\cellcolor[rgb]{0.922,0.922,1}}0.00  &                        &                        &                            &                       &                       &                        \\ 
\cline{2-18}
                                & \multirow{2}{*}{10}               & $A_{D_r}\uparrow$~                                  & 99.99                                 & 99.99                                 & 99.99                                 & 99.99                                & 99.99                                & 99.99                                & 43.61                                 & 39.30                                 & 49.94                                 & \multirow{2}{*}{29.53} & \multirow{2}{*}{57.16} & \multirow{2}{*}{89.17}     & \multirow{2}{*}{1.67} & \multirow{2}{*}{2.14} & \multirow{2}{*}{2.75}  \\
                                &                                   & {\cellcolor[rgb]{0.922,0.922,1}}$A_{D_f}\downarrow$ & {\cellcolor[rgb]{0.922,0.922,1}}99.99 & {\cellcolor[rgb]{0.922,0.922,1}}99.99 & {\cellcolor[rgb]{0.922,0.922,1}}99.99 & {\cellcolor[rgb]{0.922,0.922,1}}0.00 & {\cellcolor[rgb]{0.922,0.922,1}}0.00 & {\cellcolor[rgb]{0.922,0.922,1}}0.00 & {\cellcolor[rgb]{0.922,0.922,1}}21.05 & {\cellcolor[rgb]{0.922,0.922,1}}17.76 & {\cellcolor[rgb]{0.922,0.922,1}}29.31 &                        &                        &                            &                       &                       &                        \\ 
\midrule
\multirow{4}{*}{ResNet18}       & \multirow{2}{*}{1}                & $A_{D_r}\uparrow$                                   & 99.99                                 & 99.99                                 & 99.99                                 & 99.99                                & 99.99                                & 99.99                                & 92.81                                 & 79.55                                 & 87.06                                 & \multirow{2}{*}{31.64} & \multirow{2}{*}{62.51} & \multirow{2}{*}{92.64}     & \multirow{2}{*}{1.84} & \multirow{2}{*}{2.35} & \multirow{2}{*}{2.94}  \\
                                &                                   & {\cellcolor[rgb]{1,0.992,0.859}}$A_{D_f}\downarrow$ & {\cellcolor[rgb]{1,0.992,0.859}}99.99 & {\cellcolor[rgb]{1,0.992,0.859}}99.99 & {\cellcolor[rgb]{1,0.992,0.859}}99.99 & {\cellcolor[rgb]{1,0.992,0.859}}0.00 & {\cellcolor[rgb]{1,0.992,0.859}}0.00 & {\cellcolor[rgb]{1,0.992,0.859}}0.00 & {\cellcolor[rgb]{1,0.992,0.859}}1.38  & {\cellcolor[rgb]{1,0.992,0.859}}0.00  & {\cellcolor[rgb]{1,0.992,0.859}}0.00  &                        &                        &                            &                       &                       &                        \\ 
\cline{2-18}
                                & \multirow{2}{*}{10}               & $A_{D_r}\uparrow$                                   & 99.99                                 & 99.99                                 & 99.99                                 & 99.99                                & 99.99                                & 99.99                                & 81.02                                 & 79.76                                 & 91.04                                 & \multirow{2}{*}{29.95} & \multirow{2}{*}{59.14} & \multirow{2}{*}{88.18}     & \multirow{2}{*}{1.93} & \multirow{2}{*}{2.45} & \multirow{2}{*}{3.01}  \\
                                &                                   & {\cellcolor[rgb]{1,0.992,0.859}}$A_{D_f}\downarrow$ & {\cellcolor[rgb]{1,0.992,0.859}}99.98 & {\cellcolor[rgb]{1,0.992,0.859}}99.99 & {\cellcolor[rgb]{1,0.992,0.859}}99.99 & {\cellcolor[rgb]{1,0.992,0.859}}0.00 & {\cellcolor[rgb]{1,0.992,0.859}}0.00 & {\cellcolor[rgb]{1,0.992,0.859}}0.00 & {\cellcolor[rgb]{1,0.992,0.859}}0.00  & {\cellcolor[rgb]{1,0.992,0.859}}0.00  & {\cellcolor[rgb]{1,0.992,0.859}}0.00  &                        &                        &                            &                       &                       &                        \\
\bottomrule
\end{tabular}
}
\end{table*}
\begin{table*}
\centering
\setlength{\extrarowheight}{0pt}
\addtolength{\extrarowheight}{\aboverulesep}
\addtolength{\extrarowheight}{\belowrulesep}
\setlength{\aboverulesep}{0pt}
\setlength{\belowrulesep}{0pt}
\caption{\small Large-scale Federated Unlearning Performance of Language and Graph Models on Yahoo! Answers}
\label{table:Yahoo}
\renewcommand\arraystretch{0.6}
\resizebox{1\textwidth}{!}{
\begin{tabular}{ccc|ccccccccc|cccccc} 
\toprule
\multirow{3}{*}{\textbf{Model}}  & \multirow{3}{*}{\textbf{\#$C_f$}} & \multirow{3}{*}{\textbf{Metrics}}                 & \multicolumn{3}{c|}{\multirow{2}{*}{Original (\%)}}                                                                   & \multicolumn{3}{c|}{\multirow{2}{*}{Re-training~(\%)}}                                                             & \multicolumn{3}{c|}{\multirow{2}{*}{Gradient Ascent~(\%)}}                                                            & \multicolumn{6}{c}{\textbf{Cumulative Unlearning Time (s)}}                                                                                             \\ 
\cmidrule{13-18}
                                 &                                   &                                                   & \multicolumn{3}{c|}{}                                                                                                 & \multicolumn{3}{c|}{}                                                                                              & \multicolumn{3}{c|}{}                                                                                                 & \multicolumn{3}{c|}{Re-training}                                               & \multicolumn{3}{c}{Gradient Ascent}                                    \\ 
\cmidrule{4-18}
                                 &                                   &                                                   & $r_5$                                 & $r_6$                                 & $r_8$                                 & $r_5$                                & $r_6$                                & $r_8$                                & $r_5$                                 & $r_6$                                 & $r_8$                                 & $r_5$                   & $r_6$                   & \multicolumn{1}{c|}{$r_8$} & $r_5$                 & $r_6$                 & $r_8$                  \\ 
\midrule
\multirow{6}{*}{Bert-Base-Cased} & \multirow{2}{*}{1}                & $A_{D_r}\uparrow$                                   & 99.99                                 & 99.99                                 & 99.99                                 & 99.99                                & 99.99                                & 99.99                                & 84.40                                 & 96.39                                 & 84.49                                 & \multirow{2}{*}{105.63} & \multirow{2}{*}{215.14} & \multirow{2}{*}{313.75}    & \multirow{2}{*}{1.97} & \multirow{2}{*}{3.05} & \multirow{2}{*}{4.14}  \\
                                 &                                   & {\cellcolor[rgb]{0.922,0.922,1}}$A_{D_f}\downarrow$ & {\cellcolor[rgb]{0.922,0.922,1}}99.99 & {\cellcolor[rgb]{0.922,0.922,1}}99.99 & {\cellcolor[rgb]{0.922,0.922,1}}99.99 & {\cellcolor[rgb]{0.922,0.922,1}}0.00 & {\cellcolor[rgb]{0.922,0.922,1}}0.00 & {\cellcolor[rgb]{0.922,0.922,1}}0.00 & {\cellcolor[rgb]{0.922,0.922,1}}3.96  & {\cellcolor[rgb]{0.922,0.922,1}}8.91  & {\cellcolor[rgb]{0.922,0.922,1}}6.25  &                         &                         &                            &                       &                       &                        \\ 
\cline{2-18}
                                 & \multirow{2}{*}{2}                & $A_{D_r}\uparrow$~                                  & 99.99                                 & 99.99                                 & 99.99                                 & 99.99                                & 99.99                                & 99.99                                & 71.48                                 & 43.74                                 & 30.87                                 & \multirow{2}{*}{95.61}  & \multirow{2}{*}{190.38} & \multirow{2}{*}{308.74}    & \multirow{2}{*}{2.07} & \multirow{2}{*}{3.51} & \multirow{2}{*}{4.81}  \\
                                 &                                   & {\cellcolor[rgb]{0.922,0.922,1}}$A_{D_f}\downarrow$ & {\cellcolor[rgb]{0.922,0.922,1}}99.99 & {\cellcolor[rgb]{0.922,0.922,1}}99.99 & {\cellcolor[rgb]{0.922,0.922,1}}99.99 & {\cellcolor[rgb]{0.922,0.922,1}}0.00 & {\cellcolor[rgb]{0.922,0.922,1}}0.00 & {\cellcolor[rgb]{0.922,0.922,1}}0.00 & {\cellcolor[rgb]{0.922,0.922,1}}15.13 & {\cellcolor[rgb]{0.922,0.922,1}}19.45 & {\cellcolor[rgb]{0.922,0.922,1}}14.51 &                         &                         &                            &                       &                       &                        \\ 
\cline{2-18}
                                 & \multirow{2}{*}{4}                & $A_{D_r}\uparrow$                                   & 99.99                                 & 99.99                                 & 99.99                                 & 99.99                                & 99.99                                & 99.99                                & 70.47                                 & 49.31                                 & 50.85                                 & \multirow{2}{*}{72.16}  & \multirow{2}{*}{147.47} & \multirow{2}{*}{232.35}    & \multirow{2}{*}{2.57} & \multirow{2}{*}{4.34} & \multirow{2}{*}{5.42}  \\
                                 &                                   & {\cellcolor[rgb]{0.922,0.922,1}}$A_{D_f}\downarrow$ & {\cellcolor[rgb]{0.922,0.922,1}}99.99 & {\cellcolor[rgb]{0.922,0.922,1}}99.99 & {\cellcolor[rgb]{0.922,0.922,1}}99.99 & {\cellcolor[rgb]{0.922,0.922,1}}0.00 & {\cellcolor[rgb]{0.922,0.922,1}}0.00 & {\cellcolor[rgb]{0.922,0.922,1}}0.00 & {\cellcolor[rgb]{0.922,0.922,1}}12.29 & {\cellcolor[rgb]{0.922,0.922,1}}13.63 & {\cellcolor[rgb]{0.922,0.922,1}}18.18 &                         &                         &                            &                       &                       &                        \\ 
\midrule
\multirow{6}{*}{GCNN}            & \multirow{2}{*}{1}                & $A_{D_r}\uparrow$                                   & 99.56                                 & 99.93                                 & 96.43                                 & 99.69                                & 99.99                                & 97.64                                & 58.42                                 & 44.06                                 & 42.48                                 & \multirow{2}{*}{35.14}  & \multirow{2}{*}{71.36}  & \multirow{2}{*}{107.25}    & \multirow{2}{*}{1.02} & \multirow{2}{*}{2.02} & \multirow{2}{*}{3.17}  \\
                                 &                                   & {\cellcolor[rgb]{1,0.992,0.859}}$A_{D_f}\downarrow$ & {\cellcolor[rgb]{1,0.992,0.859}}98.81 & {\cellcolor[rgb]{1,0.992,0.859}}99.99 & {\cellcolor[rgb]{1,0.992,0.859}}94.64 & {\cellcolor[rgb]{1,0.992,0.859}}0.00 & {\cellcolor[rgb]{1,0.992,0.859}}0.00 & {\cellcolor[rgb]{1,0.992,0.859}}0.00 & {\cellcolor[rgb]{1,0.992,0.859}}14.79 & {\cellcolor[rgb]{1,0.992,0.859}}8.33  & {\cellcolor[rgb]{1,0.992,0.859}}20.71 &                         &                         &                            &                       &                       &                        \\ 
\cline{2-18}
                                 & \multirow{2}{*}{2}                & $A_{D_r}\uparrow$                                   & 99.56                                 & 99.93                                 & 96.43                                 & 99.99                                & 99.99                                & 98.31                                & 60.57                                 & 41.48                                 & 40.27                                 & \multirow{2}{*}{30.66}  & \multirow{2}{*}{61.24}  & \multirow{2}{*}{92.34}     & \multirow{2}{*}{1.44} & \multirow{2}{*}{2.87} & \multirow{2}{*}{3.86}  \\
                                 &                                   & {\cellcolor[rgb]{1,0.992,0.859}}$A_{D_f}\downarrow$ & {\cellcolor[rgb]{1,0.992,0.859}}99.11 & {\cellcolor[rgb]{1,0.992,0.859}}99.99 & {\cellcolor[rgb]{1,0.992,0.859}}95.60 & {\cellcolor[rgb]{1,0.992,0.859}}0.00 & {\cellcolor[rgb]{1,0.992,0.859}}0.00 & {\cellcolor[rgb]{1,0.992,0.859}}0.00 & {\cellcolor[rgb]{1,0.992,0.859}}17.80 & {\cellcolor[rgb]{1,0.992,0.859}}8.01  & {\cellcolor[rgb]{1,0.992,0.859}}13.19 &                         &                         &                            &                       &                       &                        \\ 
\cline{2-18}
                                 & \multirow{2}{*}{4}                & $A_{D_r}\uparrow$                                   & 99.56                                 & 99.93                                 & 96.43                                 & 99.99                                & 99.99                                & 98.34                                & 54.34                                 & 35.80                                 & 41.72                                 & \multirow{2}{*}{23.31}  & \multirow{2}{*}{46.46}  & \multirow{2}{*}{71.87}     & \multirow{2}{*}{2.04} & \multirow{2}{*}{3.14} & \multirow{2}{*}{4.51}  \\
                                 &                                   & {\cellcolor[rgb]{1,0.992,0.859}}$A_{D_f}\downarrow$ & {\cellcolor[rgb]{1,0.992,0.859}}99.07 & {\cellcolor[rgb]{1,0.992,0.859}}99.99 & {\cellcolor[rgb]{1,0.992,0.859}}95.67 & {\cellcolor[rgb]{1,0.992,0.859}}0.00 & {\cellcolor[rgb]{1,0.992,0.859}}0.00 & {\cellcolor[rgb]{1,0.992,0.859}}0.00 & {\cellcolor[rgb]{1,0.992,0.859}}10.54 & {\cellcolor[rgb]{1,0.992,0.859}}10.00 & {\cellcolor[rgb]{1,0.992,0.859}}11.19 &                         &                         &                            &                       &                       &                        \\
\bottomrule
\end{tabular}
}
\end{table*}
\begin{table*}
\centering
\setlength{\extrarowheight}{0pt}
\addtolength{\extrarowheight}{\aboverulesep}
\addtolength{\extrarowheight}{\belowrulesep}
\setlength{\aboverulesep}{0pt}
\setlength{\belowrulesep}{0pt}
\caption{\small Federated Unlearning Performance on TinyImageNet}
\label{table:fedrated_TinyImageNet}
\renewcommand\arraystretch{0.6}
\resizebox{1\textwidth}{!}{
\begin{tabular}{ccc|ccccccccc|cccccc} 
\toprule
\multirow{3}{*}{\textbf{Model}} & \multirow{3}{*}{\textbf{\#$C_f$}} & \multirow{3}{*}{\textbf{Metrics}}                 & \multicolumn{3}{c|}{\multirow{2}{*}{Original (\%)}}                                                                   & \multicolumn{3}{c|}{\multirow{2}{*}{Re-training~(\%)}}                                                             & \multicolumn{3}{c|}{\multirow{2}{*}{Gradient Ascent (\%)}}                                                            & \multicolumn{6}{c}{\textbf{Cumulative Unlearning Time (s)}}                                                                                                \\ 
\cmidrule{13-18}
                                &                                   &                                                   & \multicolumn{3}{c|}{}                                                                                                 & \multicolumn{3}{c|}{}                                                                                              & \multicolumn{3}{c|}{}                                                                                                 & \multicolumn{3}{c|}{Re-training}                                               & \multicolumn{3}{c}{Gradient Ascent}                                       \\ 
\cmidrule{4-18}
                                &                                   &                                                   & $r_5$                                 & $r_6$                                 & $r_8$                                 & $r_5$                                & $r_6$                                & $r_8$                                & $r_5$                                 & $r_6$                                 & $r_8$                                 & $r_5$                   & $r_6$                   & \multicolumn{1}{c|}{$r_8$} & $r_5$                  & $r_6$                  & $r_8$                   \\ 
\midrule
\multirow{8}{*}{ResNet18}       & \multirow{2}{*}{1}                & $A_{D_r}\uparrow$                                   & 99.99                                 & 99.99                                 & 99.99                                 & 99.99                                & 99.99                                & 99.99                                & 99.98                                 & 99.98                                 & 99.99                                 & \multirow{2}{*}{114.95} & \multirow{2}{*}{220.39} & \multirow{2}{*}{320.29}    & \multirow{2}{*}{21.44} & \multirow{2}{*}{25.78} & \multirow{2}{*}{28.62}  \\
                                &                                   & {\cellcolor[rgb]{0.922,0.922,1}}$A_{D_f}\downarrow$ & {\cellcolor[rgb]{0.922,0.922,1}}99.99 & {\cellcolor[rgb]{0.922,0.922,1}}99.99 & {\cellcolor[rgb]{0.922,0.922,1}}99.99 & {\cellcolor[rgb]{0.922,0.922,1}}0.00 & {\cellcolor[rgb]{0.922,0.922,1}}0.00 & {\cellcolor[rgb]{0.922,0.922,1}}0.00 & {\cellcolor[rgb]{0.922,0.922,1}}0.00  & {\cellcolor[rgb]{0.922,0.922,1}}1.44  & {\cellcolor[rgb]{0.922,0.922,1}}18.60 &                         &                         &                            &                        &                        &                         \\ 
\cline{2-18}
                                & \multirow{2}{*}{4}                & $A_{D_r}\uparrow$                                   & 99.99                                 & 99.99                                 & 99.99                                 & 99.99                                & 99.99                                & 99.99                                & 83.25                                 & 93.73                                 & 81.06                                 & \multirow{2}{*}{111.89} & \multirow{2}{*}{213.31} & \multirow{2}{*}{316.42}    & \multirow{2}{*}{20.67} & \multirow{2}{*}{24.97} & \multirow{2}{*}{27.51}  \\
                                &                                   & {\cellcolor[rgb]{0.922,0.922,1}}$A_{D_f}\downarrow$ & {\cellcolor[rgb]{0.922,0.922,1}}99.99 & {\cellcolor[rgb]{0.922,0.922,1}}99.99 & {\cellcolor[rgb]{0.922,0.922,1}}99.99 & {\cellcolor[rgb]{0.922,0.922,1}}0.00 & {\cellcolor[rgb]{0.922,0.922,1}}0.00 & {\cellcolor[rgb]{0.922,0.922,1}}0.00 & {\cellcolor[rgb]{0.922,0.922,1}}7.16  & {\cellcolor[rgb]{0.922,0.922,1}}21.49 & {\cellcolor[rgb]{0.922,0.922,1}}14.33 &                         &                         &                            &                        &                        &                         \\ 
\cline{2-18}
                                & \multirow{2}{*}{6}                & $A_{D_r}\uparrow$                                   & 99.99                                 & 99.99                                 & 99.99                                 & 99.96                                & 99.99                                & 99.99                                & 78.36                                 & 59.99                                 & 68.41                                 & \multirow{2}{*}{109.39} & \multirow{2}{*}{212.12} & \multirow{2}{*}{311.32}    & \multirow{2}{*}{20.13} & \multirow{2}{*}{24.56} & \multirow{2}{*}{26.47}  \\
                                &                                   & {\cellcolor[rgb]{1,0.992,0.859}}$A_{D_f}\downarrow$ & {\cellcolor[rgb]{1,0.992,0.859}}99.99 & {\cellcolor[rgb]{1,0.992,0.859}}99.99 & {\cellcolor[rgb]{1,0.992,0.859}}99.99 & {\cellcolor[rgb]{1,0.992,0.859}}0.00 & {\cellcolor[rgb]{1,0.992,0.859}}0.00 & {\cellcolor[rgb]{1,0.992,0.859}}0.00 & {\cellcolor[rgb]{1,0.992,0.859}}12.94 & {\cellcolor[rgb]{1,0.992,0.859}}11.27 & {\cellcolor[rgb]{1,0.992,0.859}}19.34 &                         &                         &                            &                        &                        &                         \\ 
\cline{2-18}
                                & \multirow{2}{*}{10}               & $A_{D_r}\uparrow$                                   & 99.99                                 & 99.99                                 & 99.99                                 & 99.99                                & 99.99                                & 99.99                                & 57.39                                 & 65.26                                 & 56.41                                 & \multirow{2}{*}{106.21} & \multirow{2}{*}{205.36} & \multirow{2}{*}{308.74}    & \multirow{2}{*}{18.30} & \multirow{2}{*}{21.67} & \multirow{2}{*}{23.68}  \\
                                &                                   & {\cellcolor[rgb]{1,0.992,0.859}}$A_{D_f}\downarrow$ & {\cellcolor[rgb]{1,0.992,0.859}}99.99 & {\cellcolor[rgb]{1,0.992,0.859}}99.99 & {\cellcolor[rgb]{1,0.992,0.859}}99.99 & {\cellcolor[rgb]{1,0.992,0.859}}0.00 & {\cellcolor[rgb]{1,0.992,0.859}}0.00 & {\cellcolor[rgb]{1,0.992,0.859}}0.00 & {\cellcolor[rgb]{1,0.992,0.859}}12.81 & {\cellcolor[rgb]{1,0.992,0.859}}13.30 & {\cellcolor[rgb]{1,0.992,0.859}}12.80 &                         &                         &                            &                        &                        &                         \\
\bottomrule
\end{tabular}
}
\end{table*}

Table~\ref{table:Yahoo} presents multi-label unlearning results on the Yahoo! Answers dataset using the Bert-Base-Cased and GCNN models with 50 clients. As client numbers increase, $AD_f$ slightly rises (11.19\% to 20.71\%) but remains low overall. BlockFUL achieves better time efficiency in GA than re-training, despite a minor trade-off in $AD_r$, with GA significantly reducing computational costs. Similar trends appear in Table~\ref{table:fedrated_TinyImageNet} for TinyImageNet, where GA consistently outperforms re-training. The $AD_r$ values remain within an acceptable range, ensuring the method’s effectiveness.

\vspace{-0.4em}
\begin{center}
\fbox{%
\begin{minipage}{0.98\linewidth}
\change{\textbf{Takeaway—scalable unlearning under non-IID and federated settings.}
Gradient ascent maintains
strong unlearning performance across IID and non-IID scenarios, particularly excelling under sample-
non-IID due to preserved data diversity. It remains effective as the number of forgotten labels increases
and scales well across federated setups, offering significant time savings over re-training with only
minor trade-offs in retention accuracy.
}
\end{minipage}
}
\end{center}

\subsection{Evaluating Unlearning Scalability}\label{Scalability}

\begin{figure*}[ht]
    \centering
    \subfigure[]{
        \includegraphics[width=0.21\textwidth]{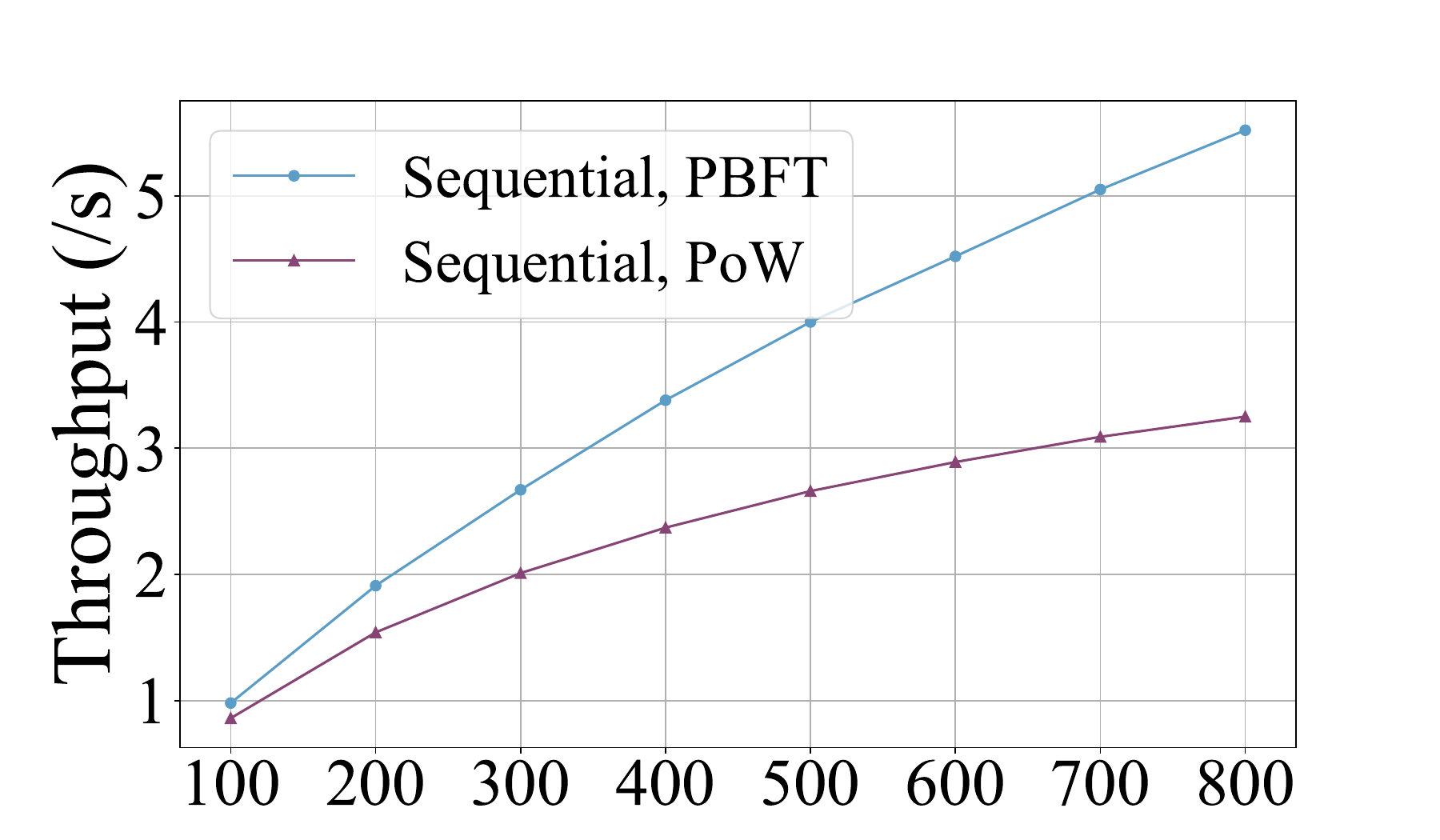}
        \label{fig:1Sequential}
    }
    \subfigure[]{
        \includegraphics[width=0.21\textwidth]{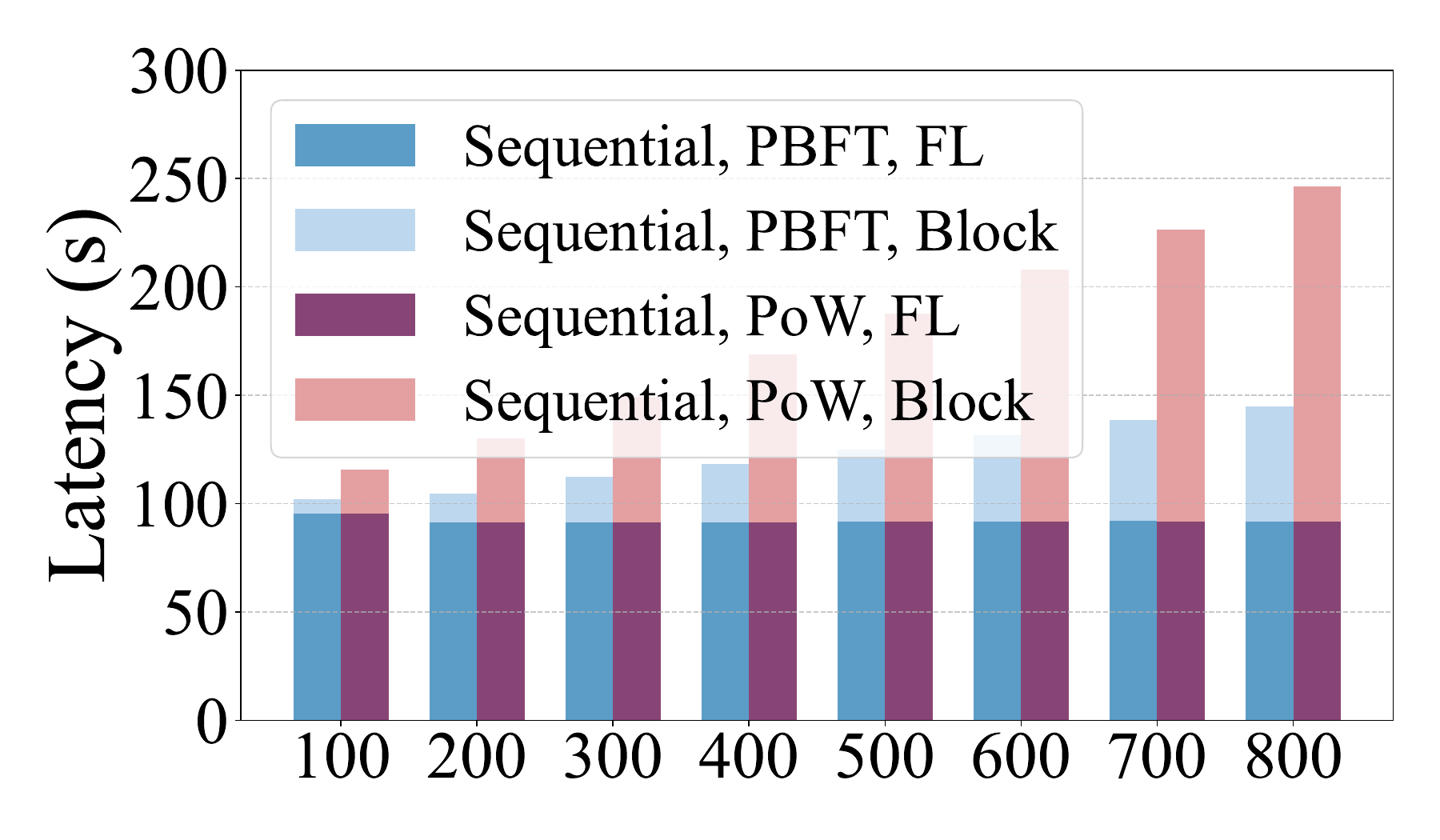}
        \label{fig:2Sequential}
    }
    \subfigure[]{
        \includegraphics[width=0.21\textwidth]{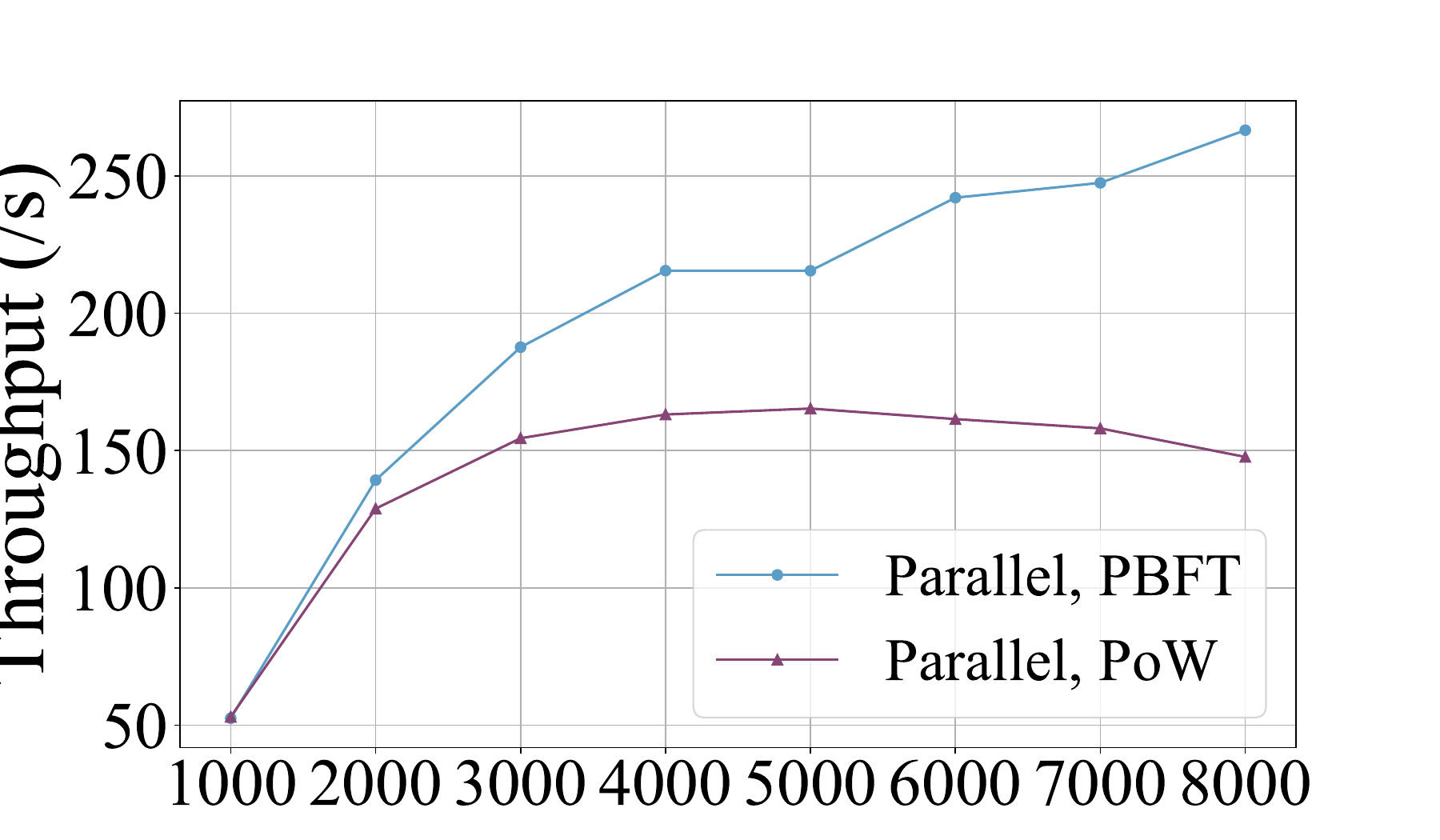}
        \label{fig:3Parallel}
    }
    \subfigure[]{
        \includegraphics[width=0.21\textwidth]{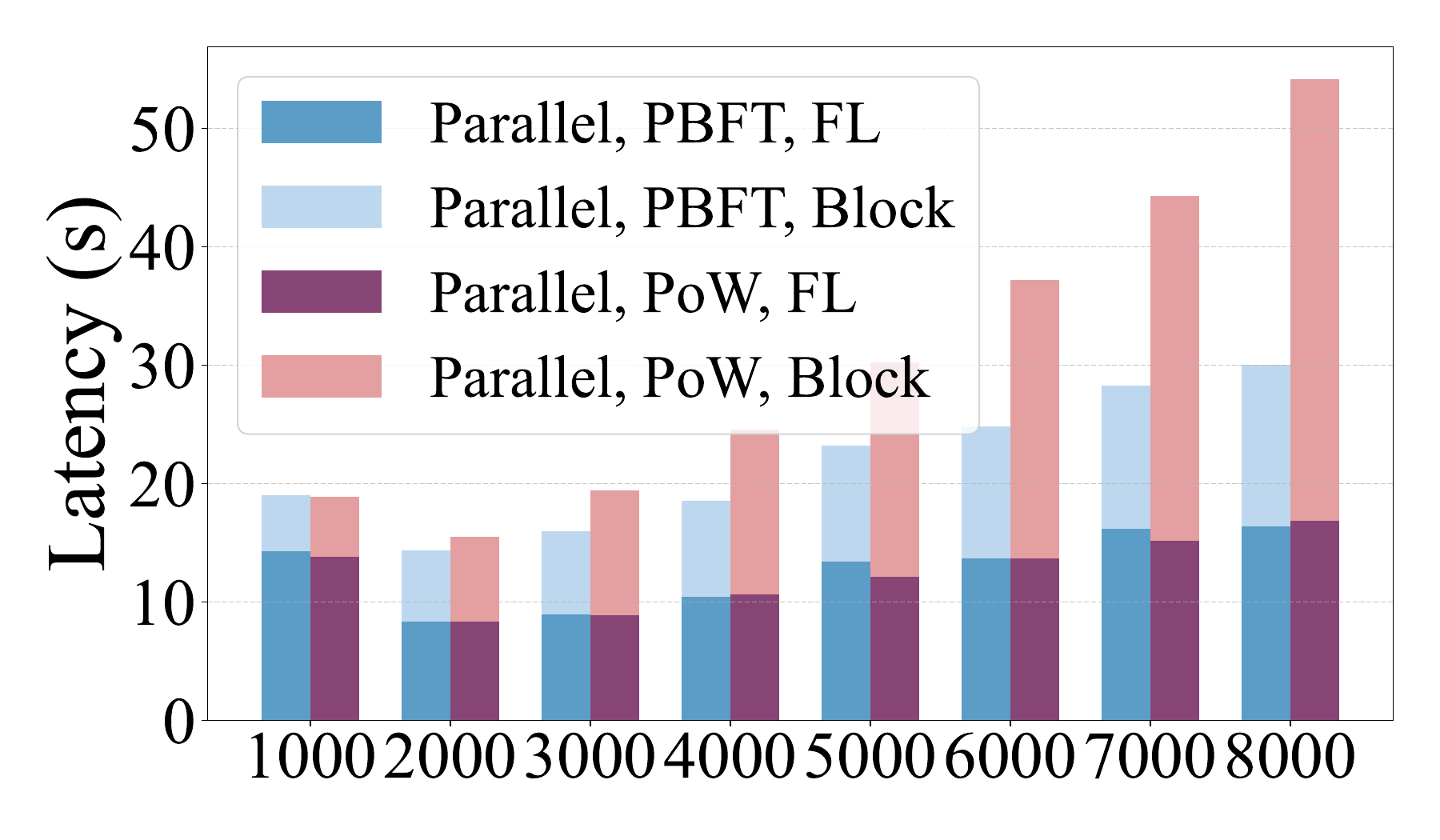}
        \label{fig:4Parallel}
    }
    \caption{\small (a) and (b) show the sequential throughput and latency; (c) and (d) show the parallel throughput and latency.}
    \label{fig:1-4}
\end{figure*}

\begin{figure*}[ht]
    \centering
    \subfigure[]{
        \includegraphics[width=0.21\textwidth]{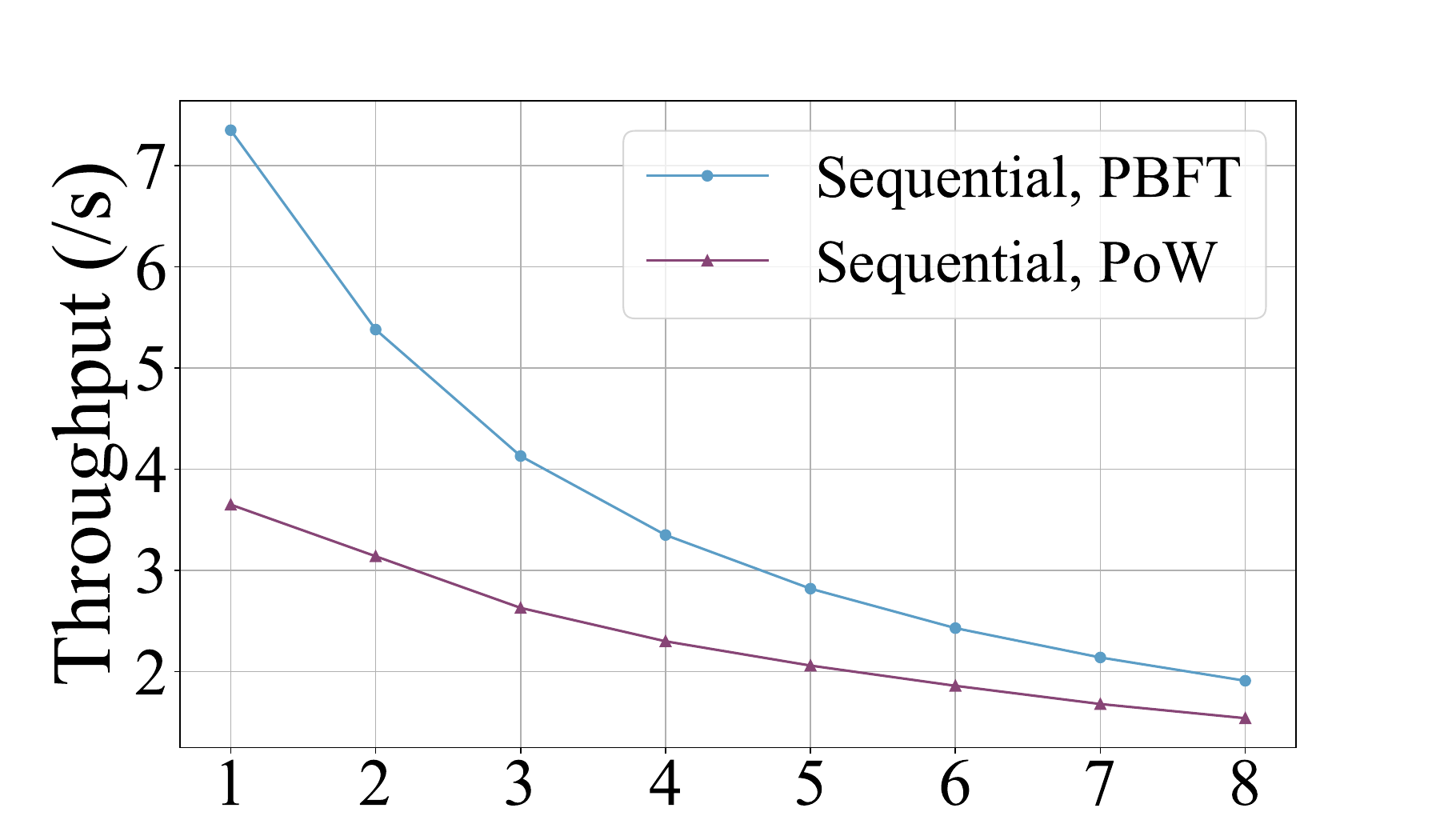}
        \label{fig:6Sequential}
    }
    \subfigure[]{
        \includegraphics[width=0.21\textwidth]{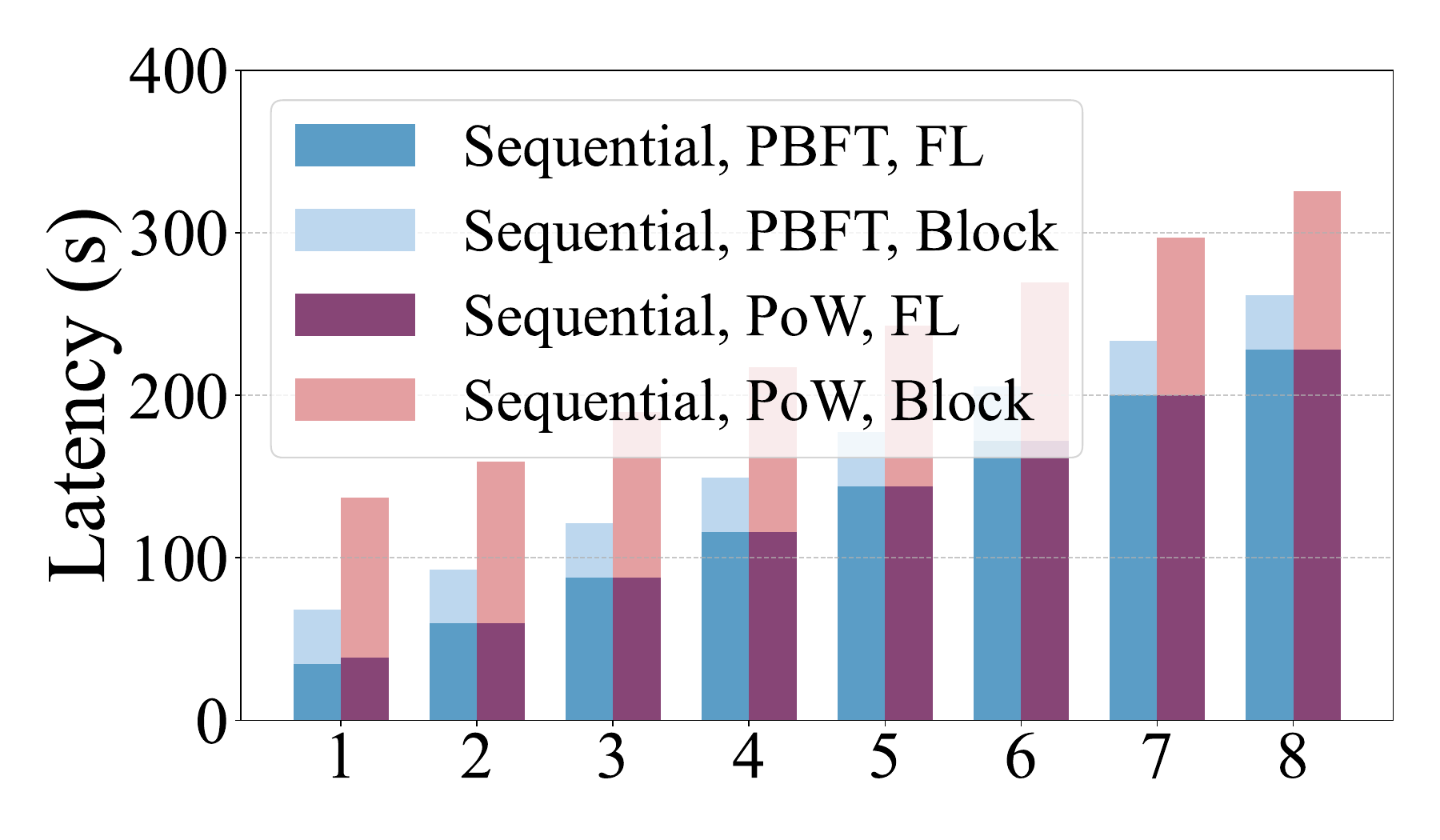}
        \label{fig:5Sequential}
    }
    \subfigure[]{
        \includegraphics[width=0.21\textwidth]{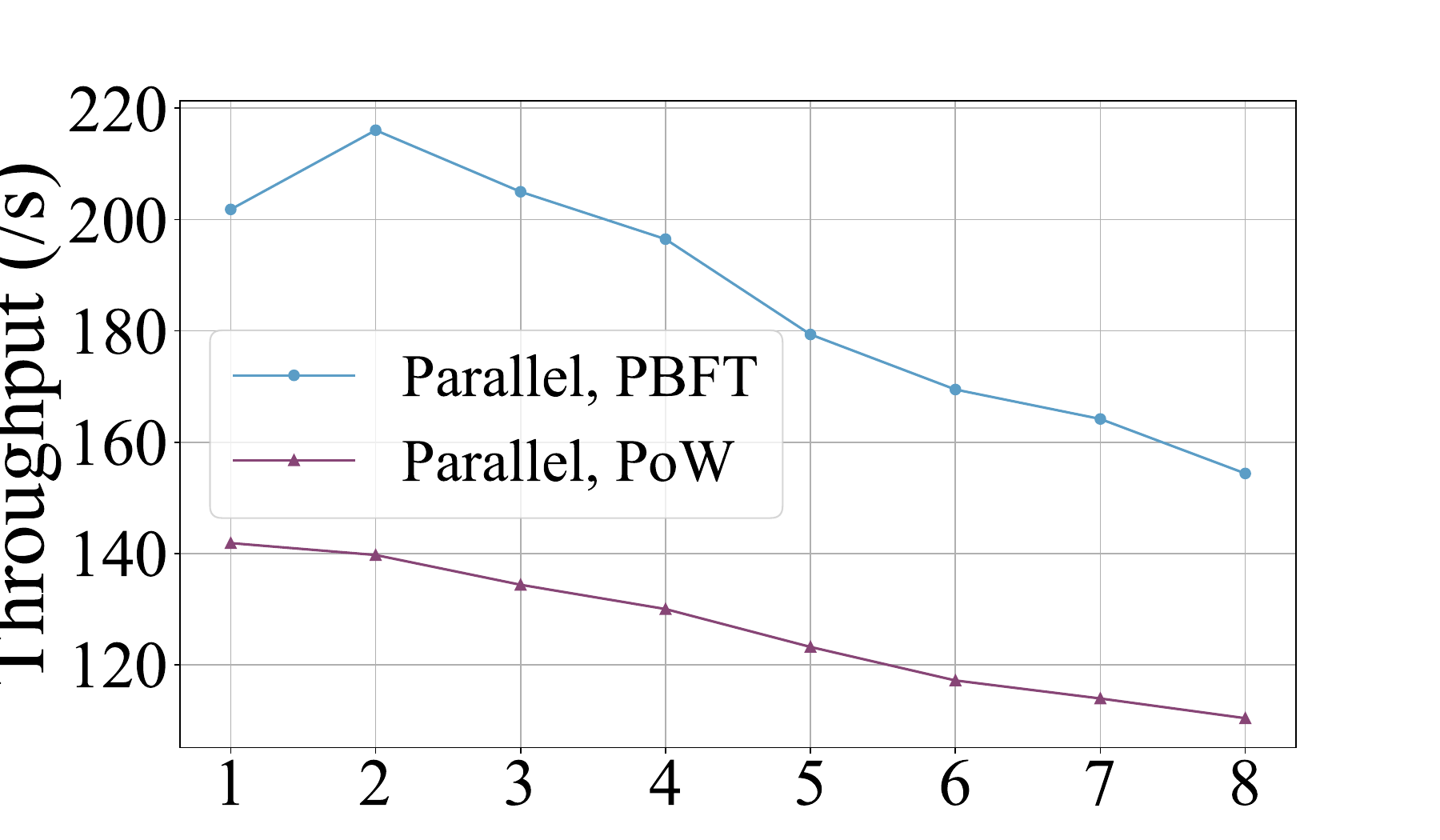}
        \label{fig:7Parallel}
    }
    \subfigure[]{
        \includegraphics[width=0.21\textwidth]{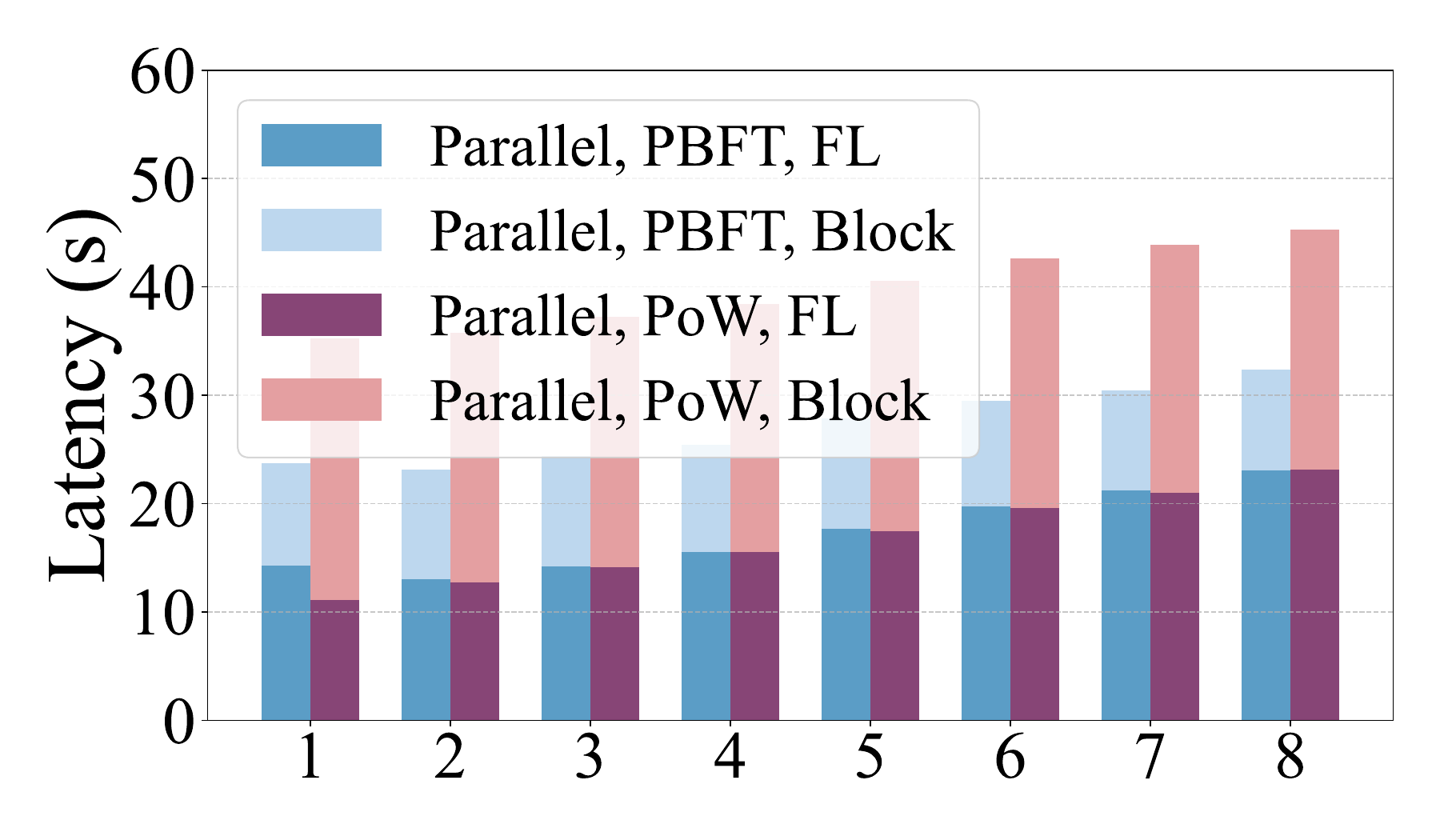}
        \label{fig:8Parallel}
    }
    \caption{\small (a) and (b) show the sequential throughput and latency across different DAG depths; (c) and (d) show the parallel throughput and latency across different DAG depths.}
    \label{fig:5-8}
\end{figure*}

\begin{figure*}[ht]
    \centering
    \subfigure[]{
        \includegraphics[width=0.21\textwidth]{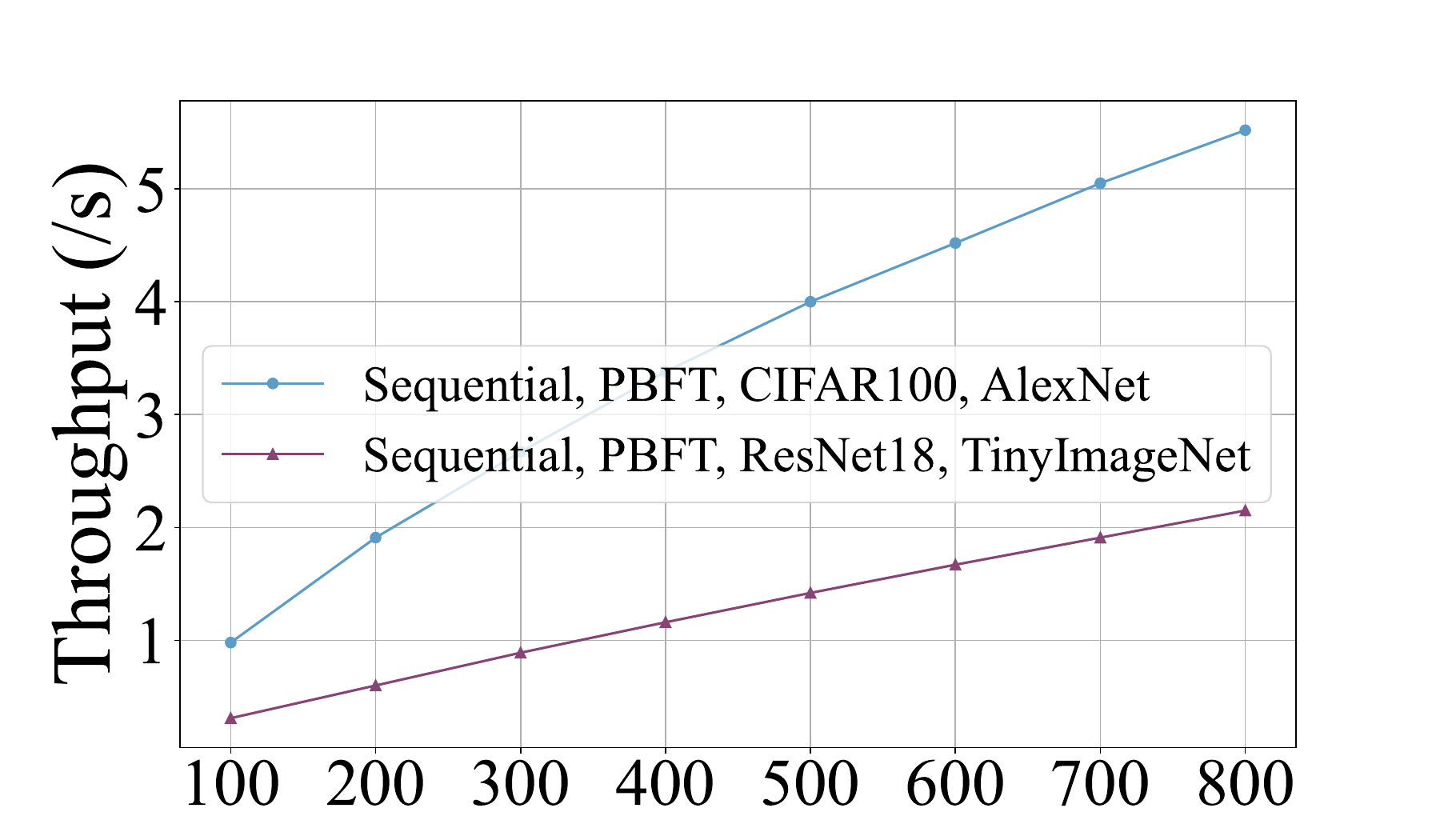}
        \label{fig:9Sequential}
    }
    \subfigure[]{
        \includegraphics[width=0.21\textwidth]{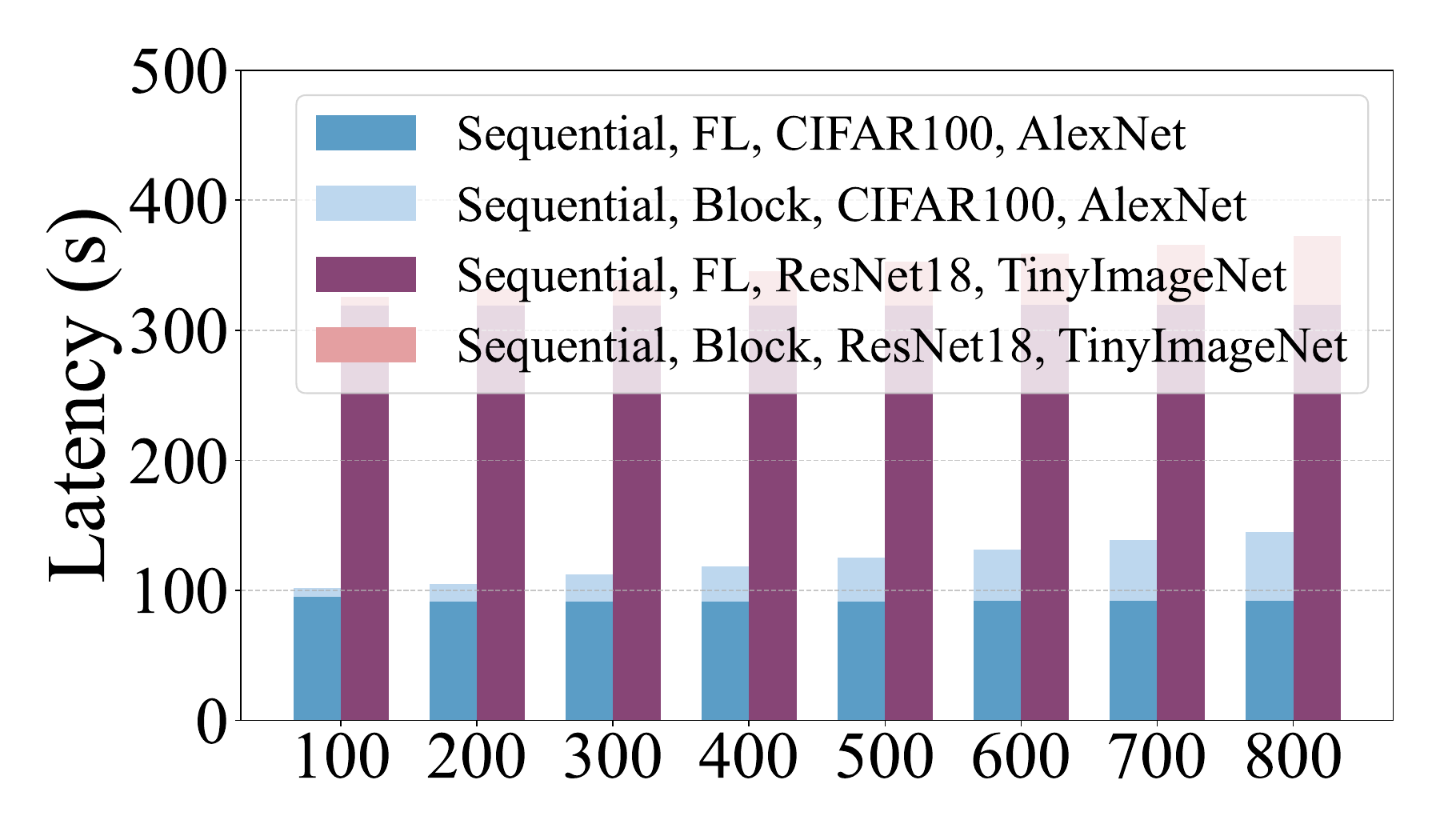}
        \label{fig:10Sequential}
    }
    \subfigure[]{
        \includegraphics[width=0.21\textwidth]{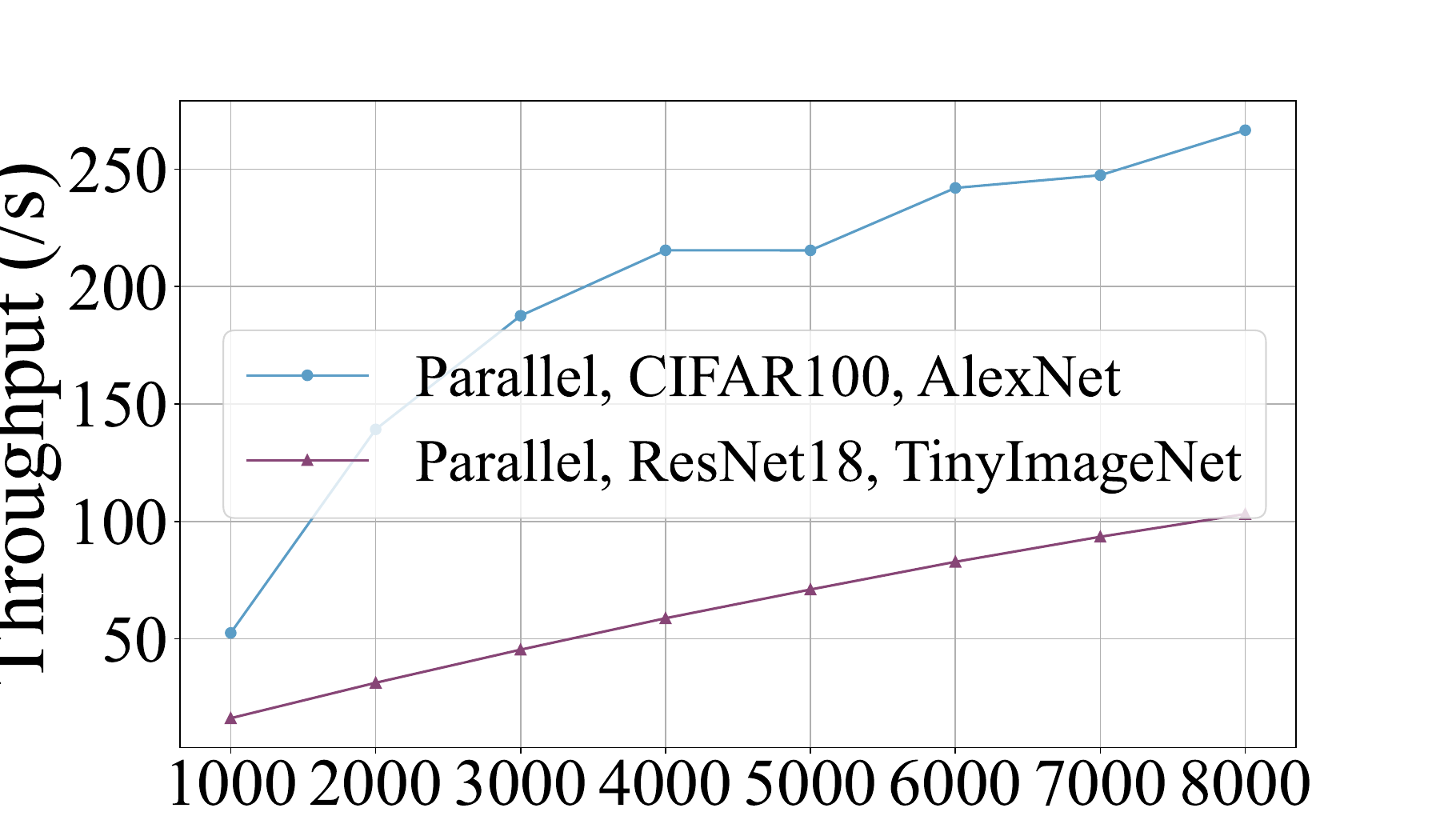}
        \label{fig:11Parallel}
    }
    \subfigure[]{
        \includegraphics[width=0.21\textwidth]{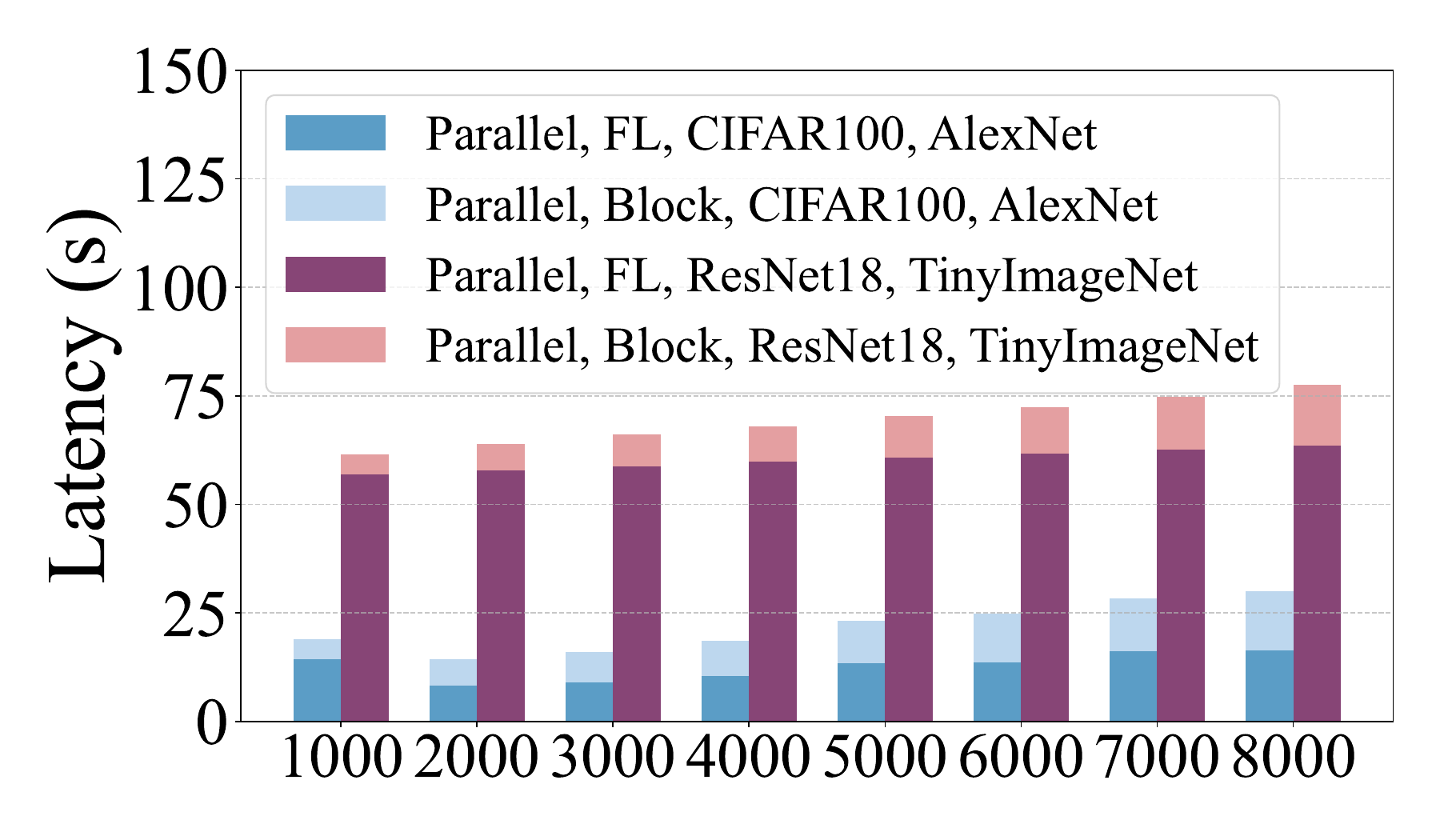}
        \label{fig:12Parallel}
    }
    \caption{\small (a) and (b) show the throughput and latency of sequential unlearning across different datasets and models; (c) and (d) show the throughput and latency of parallel unlearning across different datasets and models.}
    \label{fig:9-12}
\end{figure*}

We analyze FL and blockchain time overhead under sequential (re-training) and parallel (gradient ascent) paradigms for \textbf{G4}.
Each unlearning request maps to a model chain of depth 1 to 8, with each model linked to a block. We evaluate performance under sequential (100–800 requests) and parallel (1000–8000 requests) paradigms across varying scales and processing strategies.

In the sequential paradigm, throughput decreases as the number of unlearning requests grows, with both PBFT and PoW consensus mechanisms affected (Fig.~\ref{fig:1Sequential}). PoW experiences a steeper decline beyond 500 requests due to its higher block overhead, while PBFT sustains more stable throughput thanks to its lightweight consensus process. Latency, however, rises with increasing requests (Fig.~\ref{fig:2Sequential}), with PoW exhibiting the highest latency due to its complex operations, and PBFT with block requests achieving the lowest, reflecting its efficiency in handling sequential tasks.

The parallel paradigm significantly improves scalability, with throughput increasing as unlearning requests grow, due to concurrent handling of FL and block tasks (Fig.~\ref{fig:3Parallel}). PBFT outperforms PoW beyond 5000 requests by better managing block overhead, while PoW’s heavier consensus limits its performance. Latency drops with more requests (Fig.~\ref{fig:4Parallel}) as parallelism reduces processing time per request, though PoW still shows the highest latency compared to PBFT.

Moreover, DAG depth and model/dataset complexity have a non-neglible impact. In the sequential paradigm, throughput decreases with depth—PoW drops sharply beyond depth 5—while latency peaks at depth 8 (Figs.~\ref{fig:5Sequential}–\ref{fig:6Sequential}). In contrast, parallel processing yields higher throughput and lower latency, with PBFT consistently outperforming PoW, especially at depth 8 (Figs.~\ref{fig:7Parallel}–\ref{fig:8Parallel}). For different models and datasets, sequential throughput drops in complex cases like ResNet18 on TinyImageNet beyond 500 requests, with increased latency (Figs.~\ref{fig:9Sequential}–\ref{fig:10Sequential}), while parallel processing maintains lower latency and higher throughput, particularly for simpler setups like AlexNet on CIFAR100 after 5000 requests (Figs.~\ref{fig:11Parallel}–\ref{fig:12Parallel}).

\vspace{-0.4em}
\begin{center}
\fbox{%
\begin{minipage}{0.98\linewidth}
\change{\textbf{Takeaway—parallel paradigm scales efficiently with reduced latency.}
Parallel unlearning sig-
nificantly improves throughput and lowers latency under high request volume and deep model
inheritance, especially with high-speed consensus (e.g., PBFT). Unlike sequential re-training, which
suffers from increased overhead and latency as DAG depth and request count grow, gradient ascent
with parallel execution maintains stable performance and scalability across models, datasets, and
blockchain protocols.}
\end{minipage}
}
\end{center}


\section{Conclusion And Further Work}\label{section:6}
In response to \textbf{RQ1} on minimizing update costs for unlearning historical models in Blockchained FL, we developed \textit{BlockFUL}, a dual-chain framework ensuring traceable and trustworthy unlearning. For \textbf{RQ2} on adaptable unlearning methods, we introduced two paradigms in BlockFUL: parallel unlearning with gradient ascent and sequential unlearning with re-training. Our analysis showed that parallel unlearning is more cost-effective, particularly for consensus and transmission costs as model updates grow. Experiments revealed that sequential unlearning maintains high accuracy for retained data, while parallel unlearning effectiveness depends on inheritance depth and model variations. In future work, we plan to explore more efficient algorithms, such as model compression, to reduce computational costs for unlearning in large-scale datasets and complex models without compromising accuracy.

\bibliographystyle{IEEEtran}
\bibliography{main}

\vfill

\end{document}